\DeclareMathAlphabet{\mathpzc}{OT1}{pzc}{m}{it}
\newcommand{\mfr}[1]{\text{\fontfamily{frc}\selectfont\bfseries\footnotesize#1}\,}
\newcommand{\msuet}[1]{\text{\Large\suetterlin#1}}
\newfont{\rsfsten}{rsfs10 scaled 1200}
\newcommand{\n}{\vspace{\baselineskip}}
\newcommand{\Z}{\mathbb{Z}}
\newcommand{\sm}{\smallskip} 
\newcommand{\m}{\medskip} 
\newcommand{\bra}[1]{\left\langle\,#1\,\right\rvert} 
\newcommand{\ket}[1]{\left\lvert\,#1\,\right\rangle} 
\newcommand{\dg}{\dagger}
\newcommand{\mim}{\Leftrightarrow} 
\newtheorem{thm}{Theorem}[section]
\newtheorem{cor}{Corollary}[thm]
\newtheorem{lem}[thm]{Lemma}
\newtheoremstyle{propositional}
{10pt}
{10pt}
{
	\addtolength{\linewidth}{-2.0em}
	\parshape 1 2.0em \linewidth} 
{}
{\sc}
{:}
{.5em}
{}
\theoremstyle{propositional}
\newtheoremstyle{definitive}
{10pt}
{10pt}
{
	\addtolength{\linewidth}{-2.0em}
	\parshape 1 0.0em \linewidth} 
{}
{\sc}
{:}
{.5em}
{}
\theoremstyle{definitive}
\newtheorem*{defi}{Definition}
\newcounter{numb}
\newcounter{bean}
\newcounter{count}
\newcounter{count2}
\newcounter{count3}
\newenvironment{problems}{\begin{list}{\arabic{numb}.}{\usecounter{numb}
			\setlength{\leftmargin}{20pt}
			\setlength{\labelwidth}{15pt}
			\setlength{\labelsep}{5pt}
			\setlength{\itemsep}{ 15.0pt plus 2.5pt minus 10.0pt}
	}}{\end{list}}
\newlength\tdima
\newcommand\tabfill[1]{%
	\setlength\tdima{\linewidth}%
	\addtolength\tdima{\@totalleftmargin}%
	\addtolength\tdima{-\dimen\@curtab}%
	\parbox[t]{\tdima}{#1\ifhmode\strut\fi}}
\newcommand\mytabs{\hspace*{5cm}\=\hspace{1cm}\=\hspace{2cm}}
\definecolor{gainsboro}{RGB}{220,220,220}
\definecolor{textclr}{RGB}{131,148,150}
\definecolor{monokai}{HTML}{272822}
\definecolor{darkb}{RGB}{0, 43, 54}
\tikzset{elliptic state/.style={draw,ellipse}}
\newcommand*\justify{%
  \fontdimen2\font=0.4em
  \fontdimen3\font=0.2em
  \fontdimen4\font=0.1em
  \fontdimen7\font=0.1em
  \hyphenchar\font=`\-
}
\begin{document}
	
	\raggedbottom
	
	\title{Constructing Driver Hamiltonians for Optimization Problems with Linear Constraints}
	\author{Hannes Leipold$^{1,2}$, Federico M. Spedalieri$^{1,3}$}
    \affiliation{
		$^1$Information Sciences Institute, University of Southern California, Marina del Rey, CA 90292, USA \\
		$^2$Department of Computer Science, University of Southern California, Los Angeles, CA 90089, USA \\
		$^3$Department of Electrical and Computer Engineering, University of Southern California, Los Angeles, CA 90089, USA
	}

	\date{\today}
	
	\begin{abstract}
		Recent advances in the field of adiabatic quantum computing and the closely related field of quantum annealing have centered around using more advanced and novel Hamiltonian representations to solve optimization problems. One of these advances has centered around the development of driver Hamiltonians that commute with the constraints of an optimization problem - allowing for another avenue to satisfying those constraints instead of imposing penalty terms for each of them. In particular, the approach is able to use sparser connectivity to embed several practical problems on quantum devices in comparison to the standard approach of using penalty terms. However, designing the driver Hamiltonians that successfully commute with several constraints has largely been based on strong intuition for specific problems and with no simple general algorithm for generating them for arbitrary constraints. In this work, we develop a simple and intuitive algebraic framework for reasoning about the commutation of Hamiltonians with linear constraints - one that allows us to classify the complexity of finding a driver Hamiltonian for an arbitrary set of linear constraints as NP-Complete. Because unitary operators are exponentials of Hermitian operators, these results can also be applied to the construction of mixers in the Quantum Alternating Operator Ansatz (QAOA) framework. 
	\end{abstract}
	
	\maketitle
	
	\setcounter{section}{0}

	\section{Introduction}
	
	\quad Quantum annealing has been proposed as a heuristic method to exploit quantum mechanical effects in order to solve discrete optimization problems. Typically, these problems require optimizing a quadratic cost function subject to a set of linear constraints. The usual approach to treating these constraints consists of adding them to the cost function as penalty terms, thereby transforming the constrained optimization into an unconstrained one. This approach has some drawbacks, including an increase in the required resources (i.e., higher connectivity and increased dynamical range for the parameters that define the instance). In Ref.\cite{hen2016quantum}, the authors introduced the idea of constrained quantum annealing (CQA), that uses specially tailored driver Hamiltonians for a specific set of constraints. These tailored Hamiltonians have several advantages, such as reducing the size of the search space of the problem and reducing the number of needed interactions to implement the annealing protocol. At the heart of the approach is the idea that a Hamiltonian which commutes with the operator embedding of the constraints and starts within the feasible space of configurations will remain in it throughout the evolution.

	\m 
	
	\quad While Ref.\cite{hen2016quantum} looked primarily at commuting with a single global constraint, the work in Ref.\cite{hen2016driver} focused on finding driver Hamiltonians for several constraints. Under special conditions, the authors were able to construct appropriate driver Hamiltonians for several optimization problems of practical interest. In this paper, we ask and answer the general question of, given a set of arbitrary linear constraints, can we  construct a driver Hamiltonian which will commute with the operator embedding of the constraints? Our main result is that this problem is NP-Complete -- answering a question posited originally in Ref.\cite{hen2016driver}. Along the way, we will derive a simple formula for describing the commutation relation and exploit it to understand many facets of CQA. These results can be naturally applied to the Quantum Alternating Operator Ansatz (QAOA) framework~\cite{hadfield2017quantum}, where one of the central tasks is to find  mixer operators that connect feasible solutions of a constrained optimization problem. Since unitary operators are exponentials of Hermitian operators (for every unitary matrix $ U $, there exists a Hermitian matrix $ H $ such that $ U = e^{i H } $) and therefore the existence of a unitary matrix with this commutative property necessitates the existence of the Hermitian matrix with the same commutative property (since $ [ e^{ i H } , \hat{C} ] = \sum_{k=0}^{\infty} [ \left( i H \right)^{ k } , \hat{C} ] / k! $), our results directly translate into this setting as well.
	
	\m 
	
	\quad  The paper is organized as follows. In Section~\ref{sec:background} we review the basic ideas behind constrained quantum annealing. In Section~\ref{algcondition}, we derive a simple algebraic condition for the commutation relation of the driver Hamiltonian and the constraint operators. For many practical applications, Hamiltonians with bounded weight interaction (local) terms are often desired; in Section~\ref{sec:bounded} we show how brute forcing the simple algebraic condition from Section~\ref{algcondition} to find driver Hamiltonians of bounded weight. In Section~\ref{sec:probdef}, we introduce several variations of the problem \texttt{ILP-QCOMMUTE}, the problem of finding a Hermitian matrix that will commute with the constraint operators. We also reduce the \texttt{EQUAL SUBSET SUM} problem to the problem \texttt{ILP-QCOMMUTE} (and in Appendix~\ref{sec:redoxold} we show it for the special case of binary valued linear constraints). The \texttt{EQUAL SUBSET SUM} problem is known to be NP-Complete, thus proving our main result. We define a related complexity class, \texttt{ILP-QCOMMUTE-k-LOCAL}, about finding a Hermitian matrix that commutes with the constraints and has interaction terms up to $ k $ weight. \texttt{ILP-QCOMMUTE-k-LOCAL} is in \texttt{P} by the approach detailed in Section~\ref{sec:bounded}. Other questions of interest, such as finding a driver Hamiltonian that has full reachability over a constrained space for a CQA protocol, will be addressed through our formulation in Section~\ref{sec:reachability}. We conclude with a discussion of the significance of our result and open problems related to what we have shown in this work.
	
	\section{Background}\label{sec:background}
	
	\quad In the quantum annealing (QA) framework, gradually decreasing quantum fluctuations are used to traverse the barriers of an energy landscape in the search of global minima to complicated cost functions \cite{kadowaki1998quantum,farhi2000quantum}. For an overview of these approaches, we refer the reader to Ref.\cite{albash2018adiabatic}. Quantum annealing has gained traction for combinatorial optimization \cite{farhi2001quantum,santoro2006optimization,bian2016mapping,hauke2020perspectives} as a way to solve hard optimization problems faster and, more recently, for machine learning \cite{adachi2015application,mott2017solving,amin2018quantum,biamonte2017quantum,li2018quantum,kumar2018quantum} as a way to naturally sample desired probability distributions quickly. In the case of solving an optimization problem, the problem is encoded in the Hamiltonian $ H_{p} $ such that the ground state is the optimum solution. Usually this is readily done by expressing the problem as an Ising model, a model for spin glasses\cite{brush1967history,sherrington1975solvable,castellani2005spin,troyer2005computational}. 
	
	\m 
	
	\quad Once the problem Hamiltonian is described, the QA framework prescribes an evolution to the final Hamiltonian from some readily preparable Hamiltonian $ H_{d} $ - usually through a linear interpolation of $ H_{d} $ and $ H_{p} $:
	\begin{align}
	H(s) = s \, H_{p} + (1-s) \, H_{d}, 
	\end{align}
	
	where there is a continuous smooth function $ s(t) $ for $ t \in [ 0, T ] $ such that $ s(0) = 0 $ and $ s(T) = 1 $. If the process is varied slowly enough, the adiabatic theorem ensures that the wave function of the system will be close to the instantaneous ground state of the system for any $ s $ and therefore any $ t $. By the adiabatic theorem, if the total time $ T $ that the system is evolved for is large compared to the inverse of the minimum gap squared, then the wavefunction of the system will be close to the ground state of $ H_{p} $. For the purpose of our presentation here, we restrict our focus to the case of binary linear optimization problems, a heavily studied optimization class. Specifically, we can consider a set of linear constraints $ \mathcal{C} = \{ C_{1}, \ldots, C_{m} \} $. Suppose that the solutions to the optimization problem are subject to constraints $ C_{i} $ such that a solution state $ x \in \{ 0, 1 \}^{n} $ satisfies $ C_{i} (x) = \sum_{j} c_{ij} x_{i} = b_{i} $ for some $ b_{i} $. Because $ C_{i} $ is a simple linear function, we can associate a vector $ \vec{c}_{i} $ with it such that $ C_{i}( x ) = \vec{c}_{i} \cdot \vec{x} $ where $ \vec{c}_{i} \in \Z^{n} $. When referring to constraints throughout this paper, we are referring specifically to linear constraints, for which our main results are pertinent to.
	
	\subsection{Constraint Quantum Annealing for Integer Linear Programming}
	
	\quad We use the ordinary embedding of \textit{binary} variables $ x_{i} \in \{ 0, 1 \} $ in the computational basis for quantum annealing, such that $ \vec{x} \in \{ 0, 1 \}^{n} $ is represented by $ \ket{ \vec{x} } = \ket{ x_{1} } \ldots \ket{ x_{n} }  \in \mathbb{C}^{2^{n}} $ (i.e. $ \sigma_{i}^{z} \ket{ \vec{x} } = \left( 1 - 2 \, x_{i} \right) \ket{ \vec{x} } $) and the final Hamiltonian is diagonal in the computational basis so that we can read off a solution by measuring in that basis. Following the framework of CQA, given a constraint $ C(x) = \vec{c} \cdot \vec{x} $, we associate $ C $ with an embedded constraint operator $ \hat{C} = \sum_{i=1}^{n} c_{i} \sigma_{i}^{z} $. Let us consider the case of a single constraint - $ C = (1,\ldots,1) $ - over $ n $ variables. This is also the first case presented in Ref.\cite{hen2016quantum}. It is simple to check that $ H_{d} = \sum_{i=1}^{n-1} (\sigma_{i}^{x} \sigma_{i+1}^{x} + \sigma_{i}^{y} \sigma_{i+1}^{y}) $ commutes with the constraint embedded operator $ \hat{C} = \sum_{i=1}^{n} \sigma_{i}^{z} $. For example, this type of constraint may arise in graph partitioning, since the partitions must split the graph into equal size. For the \texttt{Graph Partition} problem, one is given a graph $ G $ and is asked to partition the vertices $ V $ into equal subsets such that the number of edges between the two is minimized. In terms of the Ising model, we can consider a collection of $ n $ qubits, such that $ \ket{ 0 } $ ($\ket{1}$) for qubit $ i $ represents placing vertex $ v_{i} $ in partition 1 (2). As such, we design a penalty Hamiltonian $ H_{p} $ and a driver Hamiltonian $ H_{d} $ such that the final state will be a solution to the graph partitioning problem. Assuming the transverse field driver Hamiltonian - $ H_{d} = \sum_{i=1}^{n} \sigma_{i}^{x} $ - a simple penalty Hamiltonian can be:
	\begin{align}
	H_{p} = \sum_{(i,j) \in E} \left( \mathbbm{1} - \sigma_{i}^{z} \sigma_{j}^{z} \right) + \alpha \, \left( \sum_{i=1}^{n} \sigma_{i}^{z} \right)^{2},
	\end{align}
	
	where the first term assigns a positive potentiality to each edge that connects vertices across the partitions and the second term is the constraint operator squared. 
	
	\m 
	
	\quad In general the penalty factor $ \alpha $ must be greater than $ \text{min}(2 \, d_{m}, n)/8$ where $ d_{m} $ is the maximal degree of $ G $ \cite{lucas2014ising}. Note that the term $ \left( \sum_{i=1}^{n} \sigma_{i}^{z} \right)^{2} $ is $ \hat{C}^{2} $ and requires $ n^{2} $ two body interaction terms to implement. However, if we choose our $ H_{d} $ such that $ [ H_{d}, \hat{C} ] = 0 $, then we can use the simpler penalty Hamiltonian:
	\begin{align}
	H_{p} = \sum_{(i,j) \in E} \left( \mathbbm{1} - \sigma_{i}^{z} \sigma_{j}^{z} \right). 
	\end{align}
	
	Note that since $ H_p $ is diagonal in the spin-z basis, it trivially commutes with the constraints.
	
	\m
	\quad One benefit to this construction is that the driver $ H_{d} = \sum_{i=1}^{n-1} (\sigma_{i}^{x} \sigma_{i+1}^{x} + \sigma_{i}^{y} \sigma_{i+1}^{y}) $, for example, commutes with $ \hat{C} $ and only requires $ n-1 $ two body interaction terms to implement. As such, the total number of two body terms required to solve the problem can be greatly reduced by using driver Hamiltonians beyond the transverse field if they commute with a set of constraints. As long as the initial wavefunction has an expected value of $ n / 2 $ for $ \hat{C} $ ($ n / 2 + 1 $ or $ n / 2 - 1 $ if $ n $ is odd), the wavefunction will remain in the subspace with this expected value for the entirety of the anneal. As an example of \texttt{Graph Partition} that we will return to later, consider a graph with 4 vertices $ V = \{ v_{1}, v_{2}, v_{3}, v_{4} \} $, connected into a single path by edges $ E = \{ e_{1}, e_{2}, e_{3} \} $ with $ e_{1} = (v_{1}, v_{2}), e_{2} = (v_{2}, v_{3}), e_{3} = (v_{3}, v_{4}) $. Then in this case, $ H_{d} = (\sigma_1^x \sigma_2^x + \sigma_1^y \sigma_2^y) + (\sigma_2^x \sigma_3^x + \sigma_2^y \sigma_3^y) + (\sigma_3^x \sigma_4^x + \sigma_3^y \sigma_4^y) $. Since we are interested in an even partition, the starting state should have an equal number of 1s and 0s. For example, $ \ket{ 0 0 1 1 } $ is in the correct subspace. It is important to note that driver Hamiltonians are constructed irrespective of the value that the constraint is set to, since the eigenvalue of the constraint operator with respect to the initial wavefunction will determine what value is preserved during the anneal. For a wavefunction that is not an eigenvector of a constraint operator, the commuting property of the driver Hamiltonian with the constraint operator will mean that the projection of the wavefunction onto each specific eigenspace will evolve independently of the rest of the wavefunction.
	
	\m 
	\quad For the purpose of adiabatic quantum computing, the initial wavefunction of the system has to be in the ground state of the initial Hamiltonian, while a general driver Hamiltonian from this construction can be highly nontrivial if there are many constraints. Therefore, specifying an alternative initial Hamiltonian for CQA is also a major area of research, since we want the initial Hamiltonian to have a ground state that is easy to prepare. One approach to overcome this is to use an initial Hamiltonian diagonal in the computational basis and linear on the $\sigma_z$ operators, that has as its ground state a \textit{specific} solution in the feasible space in the spin-z basis, and then evolve from this initial Hamiltonian to the driver Hamiltonian (whose ground state will have support on all or a subset of the feasible space). There are many hard problems for which finding a feasible solution is simple. For example, it is straightforward to find a single partition for \texttt{Graph Partition}, but it is hard to find the \textit{best} partition. As such, a useful avenue for exploiting CQA is in the case where linear constraints specify a nontrivial feasibility space, but one where finding a \textit{nonoptimum} element in the feasibility space is still tractable. 
	
	\m 
	\quad The work in Ref.\cite{hen2016driver} extended the framework for cases where a driver should commute with multiple constraint operators. In particular, given a set $ \mathcal{C} $, they consider finding a Hamiltonian $ H_{d} $ such that:
	\begin{align}
	[H_{d}, \hat{C}_{j} ] = 0, \; C_{j} \in \mathcal{C}
	\end{align}
	
	\quad As they note, in general, tailoring driver Hamiltonians for a set $ \mathcal{C} $ can be difficult. In this paper, we answer specifically the computational complexity of such a task by reducing an NP-Complete problem to \texttt{ILP-QCOMMUTE}. We also discuss the related task of finding $ H_{d} $ such that it can reach \textit{every} state in the solution space, but reaches no state outside the solution space in Section~\ref{sec:reachability}. This result in some ways may appear intuitive, since describing the feasible space of $ \mathcal{C} $ is hard and knowing a Hamiltonian that would keep a wavefunction within this space - and only this space - should require some characterization of it. Simply knowing that a nontrivial $ H_{d} $ exists at all for a NP-Complete feasibility problem allows one to recognize that the problem should have at least two solutions for some set of values that each constraint is set to, even if one does not have a token to prove it.
	
	\section{An Algebraic Condition for Commuting with Linear Constraints}\label{algcondition}
	
	\quad Consider the problem to find Hamiltonian drivers that have, as their eigenvectors, support over the possible values that satisfy the given linear constraints. In the most general sense, we consider constraints of the form:
	\begin{align}
	\hat{C} = \sum_{i=1}^{n} c_{i} \, \sigma_{i}^{z}, \; \vec{c} \in \Z^{n},
	\end{align}
	
	with a constraint value $ b $ that corresponds to one of the energy levels of $ \hat{C} $. Often problems of practical interest can be captured in the restricted case where $ \vec{c} \in \{0,1\}^{n} $ or $ \vec{c} \in \{-1,0,1\}^{n} $. 
	
	\m
	
	\quad Consider the linear transformation $ [ M, \sigma^{z} ] $ that maps any two by two matrix $ M $ to a new two by two matrix $ M' $, by commuting the matrix with $\sigma^{z}$. This transformation has two obvious eigenmatrices - $ \mathbbm{1} $ and $ \sigma^{z} $ - that span the kernel of the transformation. One can easily verify that $ \sigma^{+} $ ($ \ket{ 0 }\bra{ 1 } $) and $ \sigma^{-} $ ($ \ket{ 1 }\bra{ 0 } $) are also \textit{eigenmatrices} of this transformation, with eigenvalue $ 2 $ and $ -2 $ respectively. Together these four eigenmatrices and their eigenvalues describe the spectrum decomposition of the transformation. We exploit this fact to find a simple algebraic formula for expressing the commutation of a general Hamiltonian with a linear constraint. It is easy to verify that for $ H $ over $ n $ qubits, if $ \text{Tr}_{ 1, \ldots i-1 , i+1, \ldots n }[ H ] = \sigma^{\pm} $, then $ [ H, \sigma_{i}^{z} ] = \pm 2 \, H $. 
	
	\m 
	
	\quad Given any complete basis for a single qubit system, we can extend that basis to define a basis over $ n $ qubits. Doing this with the found eigenmatrices defines a basis $ \{ \mathbbm{1}, \sigma^{z}, \sigma^{+}, \sigma^{-} \}^{\otimes n} $. Note as well that $ \left( \alpha_{j} \sigma_{i}^{\pm} \right)^{\dg} = \alpha_{j}^{\dg} \sigma_{i}^{\mp} $ for $ \alpha_{j} \in \mathbb{C} $. This suggests a simple representation in which a Hermitian matrix is defined by its nonzero terms over this basis. Then \textit{any} Hermitian matrix can be written in the form:
	\begin{align}
	H &= \sum_{ ( \vec{y_{j}}, \vec{v_j}, \vec{w_j} ) \in \Delta( \mathscr{Y},  \; \mathscr{V}, \; \mathscr{W} ) } \alpha_{j} \bigotimes_{i = 1}^{n} \left( \sigma^{z} \right)^{y_{ji}} \left( \sigma^{+} \right)^{v_{ji}} \left( \sigma^{-} \right)^{w_{ji}} \nonumber \\ 
	&\phantom{= } + \sum_{ ( \vec{y_{j}}, \vec{v_j}, \vec{w_j} ) \in \Delta( \mathscr{Y}, \; \mathscr{V}, \; \mathscr{W} ) } \alpha_{j}^{\dg} \bigotimes_{i = 1}^{n} \left( \sigma^{z} \right)^{y_{ji}} \left( \sigma^{+} \right)^{w_{ji}} \left( \sigma^{-} \right)^{v_{ji}}, \label{hexpand}
	\end{align}
	
	where $ \mathscr{Y} = \{ \vec{y_{1}}, \ldots, \vec{y_{ r }} \} $ with $ \vec{y}_{i} \in \{0,1\}^{n} $, $ \mathscr{V} = \{ \vec{v_{1}}, \ldots, \vec{v_{ r }} \} $ with $ \vec{v}_{i} \in \{0,1\}^{n} $, and $ \mathscr{W} = \{ \vec{w_{1}}, \ldots, \vec{w_{ r }} \} $ with $ \vec{w}_{i} \in \{0,1\}^{n} $ are such that the corresponding $ \alpha_{i} \neq 0 $. Here $ \Delta( \mathscr{Y}, \mathscr{V}, \mathscr{W} ) = \{ ( \vec{y}_{i}, \vec{v}_{i}, \vec{w}_{i} ) | y_{i} \in \mathscr{Y}, v_{i} \in \mathscr{V}, w_{i} \in \mathscr{W} \} $, where $ \Delta $ simply takes any indexed element sets and creates the set of the index-wise confederated tuples. A tuple $ (\vec{y}_{i}, \vec{v}_{i}, \vec{w}_{i} ) $ specifies the indices in which we chose $ \sigma^{z} $, $ \sigma^{+} $, or $\sigma^{-}$ for each nonzero term. Once that choice is made, hermiticity demands the corresponding second term seen in Eq.~\ref{hexpand} to be part of the Hamiltonian as well. However, there are restrictions on what vectors can be chosen. Specifically, $ \vec{y}_{i} \cdot \vec{w}_{i} = 0 $ and $ \vec{y}_{i} \cdot \vec{v}_{i} = 0 $, since choosing $ \sigma^{z} $ and a $ \sigma^{\pm} $ would actually mean selecting $ \sigma^{\pm} $ with a new coefficient $ - \alpha_{j} $ instead. Likewise, it should also be the case that $ \vec{v}_{i} \cdot \vec{w}_{i} = 0 $ -- otherwise the term would be equivalent to two terms with the same coefficient halved and one taking the term $ \sigma^{z} $, the other taking the identity term. As such, this added constraints on $ \Delta(\mathscr{Y}, \mathscr{V}, \mathscr{W}) $ so that the representations are \textit{unique}, which must be enforced before applying the theorem below because the uniqueness of the basis will be actively used. As an example, consider the driver Hamiltonian discussed in the previous section: $ H_{d} = \sum_{i = 1}^{n - 1} ( \sigma_{i}^{x} \sigma_{i+1}^{x} + \sigma_{i}^{y} \sigma_{i+1}^{y} ) = 2 \, \sum_{i = 1}^{n - 1} ( \sigma_{i}^{+} \sigma_{i+1}^{-} + \sigma_{i}^{-} \sigma_{i+1}^{+} ) $. For this Hamiltonian, $ \mathscr{Y} = \{ \vec{0}, \ldots, \vec{0} \}, \mathscr{V} = \{ \vec{e}_{1}, \ldots, \vec{e}_{n-1}  \}, \mathscr{W} = \{ \vec{e}_{2}, \ldots, \vec{e}_{n} \} $ -- where $ \vec{e}_{i} $ refers to the standard basis vectors. While the notation is somewhat awkward, it becomes useful for expressing our first major result:
	
	\begin{thm}[Algebraic Condition for Commutativity] \label{algconcom}
		A Hermitian Matrix $ H $ commutes with a linear constraint $ C $ if and only if $ \vec{c} \cdot (\vec{v}_{j} - \vec{w}_{j}) = 0 $ for all $ \vec{v}_{j}, \vec{w}_{j} \in \Delta( \mathscr{V}, \mathscr{W} ) $. 
	\end{thm}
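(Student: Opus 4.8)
The plan is to regard the map $\Phi : M \mapsto [ M, \hat{C} ]$ as a linear operator on the space of $2^n \times 2^n$ matrices and to show that every monomial appearing in Eq.~\ref{hexpand} is one of its eigenmatrices. Since $\hat{C} = \sum_{k=1}^{n} c_k \sigma_k^{z}$ and the commutator is linear in its second argument, $\Phi(M) = \sum_{k=1}^{n} c_k [ M, \sigma_k^{z} ]$ for any $M$, so it suffices to understand $[ M, \sigma_k^{z} ]$ for a single tensor-product term. Because $\sigma_k^{z}$ acts nontrivially only on site $k$, the commutator passes through that one tensor slot, and its action can be read off from the single-qubit relations already recorded above: $[ \mathbbm{1}, \sigma^{z} ] = [ \sigma^{z}, \sigma^{z} ] = 0$, while $\sigma^{+}$ and $\sigma^{-}$ are eigenmatrices with eigenvalues $+2$ and $-2$.

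First I would carry out this local computation for a single monomial $M_j = \bigotimes_{i=1}^{n} (\sigma^{z})^{y_{ji}} (\sigma^{+})^{v_{ji}} (\sigma^{-})^{w_{ji}}$. The disjointness conditions $\vec{y}_j \cdot \vec{v}_j = \vec{y}_j \cdot \vec{w}_j = \vec{v}_j \cdot \vec{w}_j = 0$ ensure that each site carries exactly one of $\mathbbm{1}, \sigma^{z}, \sigma^{+}, \sigma^{-}$, so site $k$ contributes nothing unless it carries $\sigma^{+}$ (i.e. $v_{jk} = 1$) or $\sigma^{-}$ (i.e. $w_{jk} = 1$). Summing the per-site contributions yields the clean eigenrelation
\begin{align}
[ M_j, \hat{C} ] = 2 \, \vec{c} \cdot ( \vec{v}_j - \vec{w}_j ) \, M_j .
\end{align}
The Hermitian partner of $M_j$ in Eq.~\ref{hexpand} is obtained by interchanging the roles of $\sigma^{+}$ and $\sigma^{-}$, and is therefore an eigenmatrix with the opposite eigenvalue $-2 \, \vec{c} \cdot ( \vec{v}_j - \vec{w}_j )$.

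Summing over $j$, the commutator $[ H, \hat{C} ]$ becomes a linear combination of exactly the same monomials (and their partners) that appear in $H$, each one scaled by $2 \, \vec{c} \cdot ( \vec{v}_j - \vec{w}_j )$ times the corresponding coefficient $\alpha_j$ or $\alpha_j^{\dg}$. To finish, I would invoke the uniqueness of the representation in Eq.~\ref{hexpand}: the monomials $M_j$ and their partners are distinct elements of the basis $\{ \mathbbm{1}, \sigma^{z}, \sigma^{+}, \sigma^{-} \}^{\otimes n}$, hence linearly independent, so $[ H, \hat{C} ] = 0$ if and only if every scalar coefficient vanishes. Since each $\alpha_j \neq 0$ by assumption, this forces $\vec{c} \cdot ( \vec{v}_j - \vec{w}_j ) = 0$ for all $j$; conversely, when all these inner products vanish each eigenvalue is zero and $[ H, \hat{C} ] = 0$ at once.

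The main obstacle, and the step where the hypotheses genuinely do the work, is this final linear-independence argument. The eigenrelation makes both implications look automatic, but the deduction that every coefficient must vanish is legitimate only because $\Phi$ sends each monomial to a scalar multiple of itself rather than mixing distinct monomials, and because the disjointness constraints defining $\Delta( \mathscr{Y}, \mathscr{V}, \mathscr{W} )$ prevent two different index tuples from collapsing onto the same matrix. One minor case deserves care: a term with $\vec{v}_j = \vec{w}_j$ coincides with its own Hermitian partner, but then its eigenvalue is already $0$, so it commutes trivially and satisfies the stated condition automatically.
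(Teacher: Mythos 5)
Your proposal is correct and follows essentially the same route as the paper: compute the commutator site by site to show each basis monomial is an eigenmatrix of $M \mapsto [M,\hat{C}]$ with eigenvalue $2\,\vec{c}\cdot(\vec{v}_j - \vec{w}_j)$ (and its Hermitian partner with the opposite eigenvalue), then conclude by linear independence of the monomials that the commutator vanishes iff every such inner product is zero. The only addition is your closing remark on the degenerate case $\vec{v}_j = \vec{w}_j$, which under the disjointness constraints reduces to a diagonal term and is handled implicitly by the paper.
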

	
	\begin{proof}
		Using the form for $ H $ we introduced earlier, we can see that:
		\begin{align}
		\left[ H, \hat{C} \right] &= \sum_{ ( \vec{y_{j}}, \vec{v_j}, \vec{w_j} ) \in \Delta( \mathscr{Y}, \; \mathscr{V}, \; \mathscr{W} ) } \left[ \alpha_{j} \bigotimes_{i = 1}^{n} \left( \sigma^{z} \right)^{y_{ji}} \left( \sigma^{+} \right)^{v_{ji}} \left( \sigma^{-} \right)^{w_{ji}}, \sum_{k=1}^{n} c_{k} \, \sigma_{k}^{z} \right] \nonumber \\ 
		&\phantom{= } + \sum_{ ( \vec{y_{j}}, \vec{v_j}, \vec{w_j} ) \in \Delta( \mathscr{Y}, \; \mathscr{V}, \; \mathscr{W} ) }  \left[ \alpha_{j}^{\dg} \bigotimes_{i = 1}^{n} \left( \sigma^{z} \right)^{y_{ji}} \left( \sigma^{+} \right)^{w_{ji}} \left( \sigma^{-} \right)^{v_{ji}}, \sum_{k=1}^{n} c_{k} \, \sigma_{k}^{z} \right] \nonumber \\
		&= \sum_{ ( \vec{y_{j}}, \vec{v_j}, \vec{w_j} ) \in \Delta( \mathscr{Y}, \; \mathscr{V}, \; \mathscr{W} ) } 2 \, \alpha_{j} \left( \sum_{k=1}^{n} c_{k} (v_{jk} - w_{jk}) \right) \bigotimes_{i = 1}^{n} \left( \sigma^{z} \right)^{y_{ji}} \left( \sigma^{+} \right)^{v_{ji}} \left( \sigma^{-} \right)^{w_{ji}}  \nonumber \\ 
		&\phantom{= } + \sum_{ ( \vec{y_{j}}, \vec{v_j}, \vec{w_j} ) \in \Delta( \mathscr{Y}, \; \mathscr{V}, \; \mathscr{W} ) }  2 \, \alpha_{j}^{\dg} \left( \sum_{k=1}^{n} c_{k} (w_{jk} - v_{jk}) \right) \bigotimes_{i = 1}^{n} \left( \sigma^{z} \right)^{y_{ji}} \left( \sigma^{+} \right)^{w_{ji}} \left( \sigma^{-} \right)^{v_{ji}} \nonumber \\
		&= \sum_{ ( \vec{y_{j}}, \vec{v_j}, \vec{w_j} ) \in \Delta( \mathscr{Y}, \; \mathscr{V}, \; \mathscr{W} ) } 2 \, \alpha_{j} \, \vec{c} \cdot \left( \vec{v}_{j} - \vec{w}_{j} \right) \bigotimes_{i = 1}^{n} \left( \sigma^{z} \right)^{y_{ji}} \left( \sigma^{+} \right)^{v_{ji}} \left( \sigma^{-} \right)^{w_{ji}}  \nonumber \\ 
		&\phantom{= } - \sum_{ ( \vec{y_{j}}, \vec{v_j}, \vec{w_j} ) \in \Delta( \mathscr{Y}, \; \mathscr{V}, \; \mathscr{W} ) }  2 \, \alpha_{j}^{\dg} \, \vec{c} \cdot \left( \vec{v}_{j} - \vec{w}_{j} \right) \bigotimes_{i = 1}^{n} \left( \sigma^{z} \right)^{y_{ji}} \left( \sigma^{+} \right)^{w_{ji}} \left( \sigma^{-} \right)^{v_{ji}} \label{eq:comeq}
		\end{align}
		
		Since the set defined by the tuples $ \{ ( \vec{y}_{j}, \vec{v}_{j}, \vec{w}_{j} ) \} $ is a linearly independent set Eq.~\ref{eq:comeq} $ = 0 $ iff $ \vec{c} \cdot ( \vec{v_{j}} - \vec{w_{j}} ) = 0 $ for all $ j $.
		
	\end{proof}
	
	Consider the example discussed in Section~\ref{sec:background} in the context of Theorem~\ref{algconcom}. It is easy to see that $ v_{1} = (1,0,0,0), w_{1} = (0,1,0,0), v_{2} = (0,1,0,0), w_{2} = (0,0,1,0), v_{3} = (0,0,1,0), w_{3} = (0,0,0,1) $ would satisfy Eq.~\ref{eq:comeq}, defining $ H_{d} = \sum_{i=1}^{3} \sigma_{i}^{+} \sigma_{i+1}^{-} + \sigma_{i}^{-} \sigma_{i+1}^{+} $. Note that more vector pairs will also satisfy the condition, for example $ v_{4} = (0,0,0,1) $ and $ w_{4} = (1,0,0,0) $.
	
	\section{Bounded weight drivers}\label{sec:bounded}
	
	\quad The algebraic condition of Theorem~\ref{algconcom} can be used as a starting point to understand several features of CQA. One of them is motivated by the fact that  actual implementations of quantum annealing will likely impose a bound on the weight of the driver operators available. Hence, given a set of linear constraints, we could restrict our search for commuting drivers to those with bounded weight. 
	
	\m 
	
	\quad Consider a set $ \mathcal{C} = \{ C_{1}, \ldots, C_{m} \} $ of linear constraints on $ n $ variables, and let $ C^{M} $ be the $ m \times n $ matrix with coefficients $ c_{ij} $ (recall $ C_i (x)  = \sum_j c_{ij} x_j $). Then, Theorem~\ref{algconcom} tells us that there is a non-diagonal driver commuting with all the constraints, if and only if, the linear system $ C^{M} \, \vec{u} = \vec{0} $, where $ \vec{u} = \vec{v} - \vec{w} \in \{ -1, 0, 1 \}^{n} $. Furthermore, we can see that the number of components of $ \vec{u} $ that are non-zero is a lower bound for the weight of such a driver (the weight could be higher if we allow $ \sigma^z $ operators acting on the variables associated with the vanishing components of $ \vec{u} $). This leads to a simpler analysis of the case of bounded weight drivers.
	
	\m 
	
	\quad Assume that we are only allowed weight $ k $ drivers, and we further impose the condition that they are constructed from $ k $ 1-qubit operators that act non trivially in the computational basis (in our case, it means these are chosen from $ \{\sigma^+ , \sigma^- \} $). Then, the number of such operators is $ 2^{k-1} \binom{n}{k} $. For fixed $ k $, this is polynomial in $ n $, and so it is tractable to check all the possible vectors $ \vec{u} $ to find which ones satisfy the condition $ C^{M} \, \vec{u} = 0$. From those that do, we can construct the corresponding weight $ k $ driver that will commute with all the constraints, by assigning $ \sigma^+ $ to qubit $ i $ if $ u_i = 1 $, and 
	$ \sigma^- $ if $ u_i = -1 $. This is a simple but very useful result in practice: brute force searching for solutions to the condition from Theorem~\ref{algconcom} finds all possible driver Hamiltonians that commute with the embedded constraint operators up to a certain weight in time polynomial on the system size.

	\quad The simplest and most relevant case for practical applications is that of 2-local operators, since 2-body interactions are more easily engineered than higher order ones. In this case the condition of Theorem~\ref{algconcom} implies an even simpler characterization of when a set of constraints allow for commuting drivers. 
	\begin{cor}
        Let $ \mathcal{C} = \{ C_{1}, \ldots, C_{m} \}$ be a set of linear constraints and $ C^{M} $ the associated matrix of coefficients. Then a 2-local driver that commutes with all the constraints exists, if and only if, the matrix $ C^{M} $ has a pair of columns that are either equal, or opposite.
    \end{cor}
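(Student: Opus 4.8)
The plan is to translate the corollary entirely into the language of the linear system $C^M \vec{u} = \vec{0}$ introduced earlier in this section, and then read off the column condition by a short case analysis. Recall that, by Theorem~\ref{algconcom} together with the observation that a non-diagonal commuting driver corresponds to a solution $\vec{u} = \vec{v} - \vec{w} \in \{-1,0,1\}^n$ of $C^M \vec{u} = \vec{0}$, the weight of the driver associated with such a $\vec{u}$ is bounded below by the number of nonzero entries of $\vec{u}$, with equality when no spectator $\sigma^z$ factors are attached. Hence a genuine $2$-local commuting driver built from $\{\sigma^+,\sigma^-\}$ exists precisely when there is a solution $\vec{u}$ whose support has size exactly $2$.

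First I would fix such a $\vec{u}$ with nonzero entries at two distinct positions $p \neq q$, each entry lying in $\{+1,-1\}$. Writing $\vec{a}_j$ for the $j$-th column of $C^M$, the condition $C^M \vec{u} = \vec{0}$ is exactly $u_p \, \vec{a}_p + u_q \, \vec{a}_q = \vec{0}$. Then I would run the four-way case analysis on the signs $(u_p, u_q) \in \{+1,-1\}^2$: when $u_p = u_q$ the equation forces $\vec{a}_p = -\vec{a}_q$ (opposite columns), and when $u_p = -u_q$ it forces $\vec{a}_p = \vec{a}_q$ (equal columns). This establishes the ``only if'' direction.

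For the ``if'' direction I would reverse this reasoning constructively. If columns $p$ and $q$ are equal, take $\vec{u} = \vec{e}_p - \vec{e}_q$, which satisfies $C^M \vec{u} = \vec{0}$ and corresponds, via $\vec{v} = \vec{e}_p$ and $\vec{w} = \vec{e}_q$, to the Hermitian $2$-local driver $\sigma_p^+ \sigma_q^- + \sigma_p^- \sigma_q^+$; if the columns are opposite, take $\vec{u} = \vec{e}_p + \vec{e}_q$, giving $\sigma_p^+ \sigma_q^+ + \sigma_p^- \sigma_q^-$. In both cases the resulting operator has support size exactly $2$ and, by Theorem~\ref{algconcom}, commutes with every constraint, completing the equivalence.

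I expect the only delicate point — rather than a genuine obstacle — to be pinning down the correspondence between ``$2$-local'' and ``$\vec{u}$ has exactly two nonzero entries.'' One must be careful that a weight-one term $\sigma_j^\pm$ would instead demand a vanishing column $\vec{a}_j$, which is a distinct condition; the statement is therefore about drivers possessing a genuine two-body $\sigma^\pm$ interaction, matching the ``$2$-body'' motivation given above. Once this identification is made explicit, the rest is the routine sign bookkeeping described.
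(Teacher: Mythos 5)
Your proposal is correct and follows essentially the same route as the paper: both reduce ``2-local'' to a solution $\vec{u}$ of $C^{M}\vec{u}=\vec{0}$ with exactly two nonzero entries in $\{\pm 1\}$, observe that this is a statement about a signed linear combination of two columns, and read off ``equal'' versus ``opposite'' from the relative sign (the paper's remark that only $(1,1)$ and $(1,-1)$ need be considered is your four-way case analysis collapsed by Hermitian conjugation). Your explicit constructions $\sigma_p^{+}\sigma_q^{-}+\sigma_p^{-}\sigma_q^{+}$ and $\sigma_p^{+}\sigma_q^{+}+\sigma_p^{-}\sigma_q^{-}$ match the drivers the paper associates to each column pair.
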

	
    \quad 2-local means that $ \vec{u} $ in the condition $ C^{M} \, \vec{u} = 0 $ has only 2 non zero components, and we can take these to be $(1,1)$ or $(1,-1)$, since the other two possibilities would produce the corresponding Hermitian conjugates. Since multiplying a matrix by a column vector results in a linear combination of the columns of the matrix, with the coefficients given by the vector components, the condition $ C^{M} \, \vec{u} = 0 $ will state that $ C^M $ has two columns that are the same (if the non zero components of $ \vec{u} $ are $ (1,-1) $), or are opposites (if they are $ (1,1) $). To any distinct pair of columns that satisfy one of these conditions we can associate a distinct weight-2 driver that commutes with the constraints, so the maximum number of such possible weight-2 drivers is $ \binom{n}{2} $.

	\section{The Problem \texttt{ILP-QCOMMUTE}}\label{sec:probdef}
	
	\quad Having found a simple algebraic condition for expressing the commutation relationship of \textit{any} Hermitian matrix with a linear constraint, we wish to exploit that fact here to find what the general complexity of knowing the existence of such a Hermitian matrix is. We consider the following problem:
	
	\begin{defi}[ILP-QCOMMUTE] 
	Given a set $ \mathcal{C} = \{ C_{1}, \ldots, C_{m} \} $ of linear constraints such that $ \hat{C_{i}} = \sum_{j = 1}^{n} c_{ij} \sigma_{j}^{z} $, over a space $ \mathbb{C}^{2^{n}} $ with $ c_{ij} \in \Z $, is there a Hermitian Matrix $ H $, with $ \mathscr{O} \left( \text{poly}( n ) \right) $ nonzero coefficients over a basis $ \{ \chi_1, \chi_2, \chi_3, \chi_4 \}^{\bigotimes n} $, such that $ \left[ H, \hat{C_{i}} \right] = 0 $ for all $ \hat{C_{i}} $ and $ H $ has at least one off-diagonal term in the spin-z basis?
	\end{defi}
	
	\quad Solving this problem would be useful for constructing Hamiltonian drivers for quantum annealing. We can also define \texttt{0-1-LP-QCOMMUTE} as the binary version, where $ c_{ij} \in \{ 0 , 1 \} $, and also \texttt{\{-1,0,1\}-LP-QCOMMUTE}, where $ c_{ij} \in \{-1,0,1\} $, the type of coefficients used when representing problems like 1-in-3 3-SAT as an ILP. One of the central results of this paper is that these problems are NP-Complete\cite{cook1971complexity}\cite{karp1975computational}, which can be shown by reducing them to the \texttt{EQUAL SUBSET SUM} problem\cite{karp1972reducibility}. This reduction is simple and straightforward for \texttt{ILP-QCOMMUTE}, and we discuss it in this section to give a sense of the connection between the two problems. However, this proof is not enough to imply that \texttt{0-1-LP-QCOMMUTE} and \texttt{\{-1,0,1\}-LP-QCOMMUTE} are also NP-Complete, since they could both very well be easier subclasses of \texttt{ILP-QCOMMUTE}. That is not the case, but the proof for \texttt{0-1-LP-QCOMMUTE} is more involved (and rather tedious) and thus is presented in the Appendix~\ref{sec:redoxold}.
	
	
	\begin{defi}[EQUAL SUBSET SUM]
		Given a set $ S = \{ s_{1}, s_{2}, \ldots, s_{n} \} $, with $ s_{i} \in \Z^{+} $, are there two non-empty disjoint subsets, $ A, B $ such that $ \sum_{a_{i} \in A} a_{i} = \sum_{b_{i} \in B} b_{i} $?
	\end{defi}
	
	\quad The \texttt{EQUAL SUBSET SUM} problem is known to be NP-Complete\cite{woeginger1992equal}. We map an instance of the \texttt{EQUAL SUBSET SUM} problem to the \texttt{ILP-QCOMMUTE} problem; the former defined over a set $ S = \{ s_{1}, s_{2}, \ldots, s_{n} \} $, with $ s_{i} \in \Z^{+} $. Consider the constraint operator defined by $ \hat{C} = \sum_{i=1}^{n} s_{i} \sigma_{i}^{z} $, and the vector $\vec{s} = (s_1,\ldots,s_n)$. Suppose we can find vectors $ \vec{v}, \vec{w} $ with binary components, such that $ \vec{s} \cdot \left( \vec{v} - \vec{w} \right) = 0 $ (the algebraic condition derived in Theorem \ref{algconcom}). Then the indices corresponding to the nonzero components of $ \vec{v} $ and $ \vec{w} $ can be used to identify the sets  $ A $ and $ B $ (respectively) in the \texttt{EQUAL SUBSET SUM} problem.
	
	\m 
	
	\quad Suppose there is a solution $ H $ to \texttt{ILP-QCOMMUTE}. From Theorem~\ref{algconcom}, it follows that $ \vec{c}_{i} \cdot \left( \vec{v} - \vec{w} \right) = 0 $ for the vector $ \vec{c} $ associated with constraint $ C $ and any nonzero term in the basis $ \{ \mathbbm{1}, \sigma^{z}, \sigma^{+}, \sigma^{-} \}^{\otimes n} $ with a term $ \sigma^{\pm}$ on at least one qubit will be enough to define a new $ H' $ that will be associated with a solution to \texttt{EQUAL SUBSET SUM}. At least one such element exists for $ H $ because $ H $ has at least one off-diagonal term in the spin-z basis. We can associate any off-diagonal term of H with $ \vec{y}, \vec{v}, \vec{w} $ such that $ H' = \alpha \bigotimes_{i=1}^{n} \left( \sigma^{z} \right)^{y_{i}} \left( \sigma^{+} \right)^{v_{i}} \left( \sigma^{-} \right)^{w_{i}} + \alpha^{\dg} \bigotimes_{i=1}^{n} \left( \sigma^{z} \right)^{y_{i}} \left( \sigma^{+} \right)^{w_{i}} \left( \sigma^{-} \right)^{v_{i}} $ is a matrix with only that off-diagonal term and its complex conjugate for some $ \alpha $ and $ \vec{v} \neq \vec{0}, \vec{w} \neq \vec{0} $. Then for a specific off-diagonal term, every non-zero entry in $ \vec{v} $ between $ 1 $ and $ n $, call it $ i $, picks an integer $ s_{i} \in S $ for the set $ A $, and $ \vec{w} $ does likewise for the set $ B $, providing a solution to the corresponding instance of \texttt{EQUAL SUBSET SUM} since $ \vec{s} \cdot (\vec{v} - \vec{w}) = \left( \sum_{s_{i} \in A} s_{i} \right) - \left( \sum_{s_{i} \in B} s_{i} \right) = 0 $. Suppose there is a solution $ A, B $ to \texttt{EQUAL SUBSET SUM}, then define $ \vec{v} $ such that $ v_{i} = 1 $ ($ w_{i} = 1$) if and only if $ s_{i} $ is in $ A $ ($ B $). Then $ H = \bigotimes_{ i = 1 }^{ n } \left( \sigma^{+} \right)^{ v_{i} } \left( \sigma^{-} \right)^{ w_{i} }  + \bigotimes_{ i = 1 }^{ n } \left( \sigma^{+} \right)^{ w_{i} } \left( \sigma^{-} \right)^{ v_{i} } $ is a solution to \texttt{ILP-QCOMMUTE} since $ \left( \sum_{s_{i} \in A} s_{i} \right) + \left( \sum_{s_{i} \in B} s_{i} \right) = \vec{s} \cdot \left( \vec{v} - \vec{w} \right) = 0 $. Hence, we have the following result.

    \begin{thm}
        \texttt{ILP-QCOMMUTE} is NP-Hard.
    \end{thm}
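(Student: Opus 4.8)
The plan is to establish NP-Hardness by exhibiting a polynomial-time many-one reduction from \texttt{EQUAL SUBSET SUM}, which is known to be NP-Complete, to \texttt{ILP-QCOMMUTE}; since NP-Hardness does not require membership in NP, no verifier argument is needed. The reduction itself is immediate: given an instance $S = \{s_1, \ldots, s_n\}$ with $s_i \in \Zp$, I construct the single-constraint instance $\hat{C} = \sum_{i=1}^n s_i \sigma_i^z$, so that the coefficient vector is $\vec{c} = \vec{s}$. This map is clearly computable in time polynomial in the input size, as it merely copies the integers $s_i$ into the coefficients of one constraint, so it remains only to verify correctness in both directions.

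First I would dispatch the easy direction: a \emph{yes} instance of \texttt{EQUAL SUBSET SUM} yields a \emph{yes} instance of \texttt{ILP-QCOMMUTE}. Given disjoint nonempty subsets $A, B$ with $\sum_{s_i \in A} s_i = \sum_{s_i \in B} s_i$, I define binary indicator vectors $\vec{v}, \vec{w}$ by $v_i = 1 \Leftrightarrow s_i \in A$ and $w_i = 1 \Leftrightarrow s_i \in B$. Disjointness gives $\vec{v} \cdot \vec{w} = 0$, so the pair is admissible as a single off-diagonal term, and the equal-sum hypothesis gives $\vec{s} \cdot (\vec{v} - \vec{w}) = 0$. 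By Theorem~\ref{algconcom}, the Hermitian matrix $H = \bigotimes_i (\sigma^+)^{v_i}(\sigma^-)^{w_i} + \bigotimes_i (\sigma^+)^{w_i}(\sigma^-)^{v_i}$ commutes with $\hat{C}$, has only two nonzero terms, and is off-diagonal in the spin-z basis because $A$ and $B$ are nonempty; hence it solves \texttt{ILP-QCOMMUTE}.

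The converse direction is where the real care is needed. Given any solution $H$ to \texttt{ILP-QCOMMUTE}, I would select a single off-diagonal term, which exists by the problem's defining requirement, and read off its associated triple $(\vec{y}, \vec{v}, \vec{w})$ in the basis $\{\mathbbm{1}, \sigma^z, \sigma^+, \sigma^-\}^{\otimes n}$, discarding $\vec{y}$. Theorem~\ref{algconcom} forces $\vec{s} \cdot (\vec{v} - \vec{w}) = 0$, and the uniqueness constraints on the representation guarantee $\vec{v} \cdot \vec{w} = 0$, so the supports of $\vec{v}$ and $\vec{w}$ define disjoint subsets $A, B$. The step I expect to be the main obstacle to state cleanly — and the one place where the hypothesis $s_i \in \Zp$ is essential — is proving both $A$ and $B$ are nonempty. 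Since the term is off-diagonal, $\vec{v} + \vec{w} \neq \vec{0}$; and if, say, $\vec{w} = \vec{0}$, then $\vec{s} \cdot \vec{v} = 0$ with all $s_i > 0$ and $\vec{v}$ binary would force $\vec{v} = \vec{0}$, a contradiction. Hence both vectors are nonzero, both subsets are nonempty, and $\vec{s}\cdot(\vec{v}-\vec{w}) = 0$ translates precisely into equal subset sums. This closes the reduction and establishes that \texttt{ILP-QCOMMUTE} is NP-Hard.
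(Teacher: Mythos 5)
Your proof is correct and follows essentially the same route as the paper: reduce \texttt{EQUAL SUBSET SUM} to the single-constraint instance $\hat{C}=\sum_i s_i\sigma_i^z$ and translate between equal-sum disjoint subset pairs and off-diagonal basis terms via Theorem~\ref{algconcom}. The one point where you are more careful than the paper is the nonemptiness of both $A$ and $B$ in the converse direction --- the paper simply asserts $\vec{v}\neq\vec{0}$, $\vec{w}\neq\vec{0}$ for the chosen off-diagonal term, whereas you correctly derive it from the positivity of the $s_i$.
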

    
    Given this result, we can show NP-Completeness by noting that we can check for an off-diagonal term in the spin-z basis in polynomial time. Let $ H $ be a proposed solution to the \texttt{ILP-QCOMMUTE} such that there exists nonequivalent indices $i,j$ such that entry $ h_{ij} \neq 0 $. Checking that $ H $ commutes with the constraints is polynomial time. For $ k \in \{ 1, \ldots, n \} $, check if $ \text{Tr}_{k}\left( H \sigma_{k}^{+} \right) \neq 0 $ or $ \text{Tr}_{k}\left( H \sigma_{k}^{-} \right) \neq 0 $. $ H $ has at least one element $h_{ij}$ that is off-diagonal in the spin-z basis if and only if there exists $ k $ such that at least one of these terms is nonzero. Since the partial trace of tensor products is the trace over a specific tensor component, this can be done quickly. 
    
    \begin{cor}
        \texttt{ILP-QCOMMUTE} is NP-Complete.
    \end{cor}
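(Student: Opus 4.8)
The plan is to combine the NP-Hardness just established with a proof that \texttt{ILP-QCOMMUTE} lies in \texttt{NP}; together these give \texttt{NP}-Completeness. Since the preceding theorem already exhibits a polynomial-time reduction from \texttt{EQUAL SUBSET SUM}, the only remaining task is to produce a polynomial-time verifier that accepts a short certificate precisely when a valid commuting driver exists.

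First I would fix the certificate. The natural choice is the matrix $H$ presented not as a $2^n \times 2^n$ array (which would be exponentially large) but as its list of nonzero coefficients $\alpha_j$ together with the index tuples $(\vec{y}_j, \vec{v}_j, \vec{w}_j)$ over the basis $\{ \mathbbm{1}, \sigma^z, \sigma^+, \sigma^- \}^{\otimes n}$ of Eq.~\ref{hexpand}. The definition of \texttt{ILP-QCOMMUTE} guarantees only $\mathscr{O}(\text{poly}(n))$ such terms, so provided the coefficients are written with polynomially many bits the certificate has size polynomial in $n$.

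Next I would describe the four checks the verifier performs, each in polynomial time. (i) Count the terms to confirm the polynomial bound. (ii) Confirm the certificate is in the canonical unique form by verifying the orthogonality relations $\vec{y}_j \cdot \vec{v}_j = \vec{y}_j \cdot \vec{w}_j = \vec{v}_j \cdot \vec{w}_j = 0$ on each tuple, and confirm Hermiticity by checking that the terms appear in the conjugate pairs dictated by Eq.~\ref{hexpand}; both are syntactic scans of the list. (iii) Verify commutation with every constraint: by Theorem~\ref{algconcom} this is equivalent to checking $\vec{c}_i \cdot (\vec{v}_j - \vec{w}_j) = 0$ for each constraint $i$ and each term $j$, a total of $\mathscr{O}(mr)$ integer dot products over $n$ coordinates. (iv) Verify the existence of an off-diagonal term: a basis element is diagonal in the spin-z basis exactly when $\vec{v}_j = \vec{w}_j = \vec{0}$, so it suffices to scan for any term with $\vec{v}_j \neq \vec{0}$ or $\vec{w}_j \neq \vec{0}$; equivalently, as noted in the text, one may compute the partial traces $\text{Tr}_k( H \sigma_k^+ )$ and $\text{Tr}_k( H \sigma_k^- )$ for each qubit $k$ and accept if any is nonzero.

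The subtle point --- the main thing to get right rather than a genuine obstacle --- is step (ii): Theorem~\ref{algconcom} was derived assuming the representation is the unique canonical one, so the verifier must not take the commutation characterization at face value on an arbitrary list of terms. I would either require the canonical form and verify the orthogonality relations, or canonicalize the list first (rewriting any tuple that mixes $\sigma^z$ with a $\sigma^\pm$, or $\sigma^+$ with $\sigma^-$ on the same qubit), which is again a polynomial-time pass over polynomially many terms. Once canonical form is assured, all four checks run in polynomial time, establishing membership in \texttt{NP}; combined with the NP-Hardness theorem this yields the corollary.
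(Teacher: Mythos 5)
Your proposal is correct and follows essentially the same route as the paper: the certificate is the polynomial-size basis representation of $H$, commutation is checked via the algebraic condition of Theorem~\ref{algconcom}, and the off-diagonal condition is verified by the partial-trace test $\text{Tr}_{k}(H\sigma_{k}^{\pm})$, all combined with the preceding NP-Hardness theorem. Your additional care about canonicalizing the representation before invoking Theorem~\ref{algconcom} is a detail the paper glosses over, but it does not change the argument.
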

    
    While we have shown that this problem is NP-Hard, we note that for any specific instance of the problem, the practical runtime can still be tractable and therefore a useful avenue for even the hardest instances of optimization problems.Also take note that since every \textit{unitary} operator is the exponential of a corresponding \textit{Hermitian} operator, knowing the existence of a unitary operator that commutes with the constraint operators is paramount to knowing a Hermitian matrix exists with the same property. As such, our result immediately translates to the QAOA setting where one wishes to construct unitary operators that will \textit{commute} with the embedded linear constraint operators.
    
    \subsection{Bounded Weight ILP-QCOMMUTE}
    
    Despite the NP-hardness of \texttt{ILP-QCOMMUTE}, Section~\ref{sec:bounded} discusses a simple polynomial time algorithm to find driver terms up to some weight $ k $. Consider this modified version of \texttt{ILP-QCOMMUTE}, which asks about the existence of a Hermitian matrix that commutes with the constraints, but consists of interaction terms up to weight $ k $.
    
    \begin{defi}[ILP-QCOMMUTE-k-LOCAL] 
	Given a set $ \mathcal{C} = \{ C_{1}, \ldots, C_{m} \} $ of linear constraints such that $ \hat{C_{i}} = \sum_{j = 1}^{n} c_{ij} \sigma_{j}^{z} $, over a space $ \mathbb{C}^{2^{n}} $ with $ c_{ij} \in \Z $, is there a Hermitian Matrix $ H $, with $ \mathscr{O} \left( \text{poly}( n ) \right) $ nonzero coefficients over a basis $ \{ \chi_1, \chi_2, \chi_3, \chi_4 \}^{\bigotimes n} $ and no term with weight higher than $ k $, such that $ \left[ H, \hat{C_{i}} \right] = 0 $ for all $ \hat{C_{i}} $ and $ H $ has at least one off-diagonal term in the spin-z basis?
	\end{defi}
    
    \begin{thm}     
        \texttt{ILP-QCOMMUTE-k-LOCAL} is in \texttt{P} for k in $ \mathcal{O}(1) $.
    \end{thm}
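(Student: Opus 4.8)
The plan is to convert the decision question into a bounded search over sign vectors and show that the search space is polynomial in $n$ for constant $k$, reusing the machinery of Section~\ref{sec:bounded}. By Theorem~\ref{algconcom}, a single basis term $\bigotimes_{i}(\sigma^{z})^{y_{i}}(\sigma^{+})^{v_{i}}(\sigma^{-})^{w_{i}}$ commutes with every constraint if and only if $C^{M}(\vec{v}-\vec{w})=\vec{0}$, and the uniqueness conditions $\vec{y}\cdot\vec{v}=\vec{y}\cdot\vec{w}=\vec{v}\cdot\vec{w}=0$ force the supports of $\vec{y},\vec{v},\vec{w}$ to be pairwise disjoint, so the weight of the term is exactly $|\mathrm{supp}(\vec{y})|+|\mathrm{supp}(\vec{v}-\vec{w})|$. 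The first step is therefore to reduce the existence of a full solution $H$ to the existence of a single off-diagonal commuting term of weight at most $k$, which is in turn governed entirely by a vector $\vec{u}=\vec{v}-\vec{w}\in\{-1,0,1\}^{n}$.

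For the reduction, I would argue both directions. Suppose a valid $H$ exists and expand it as in Eq.~\ref{hexpand}. Because $H$ has an off-diagonal term in the spin-z basis, at least one tuple $(\vec{y}_{j},\vec{v}_{j},\vec{w}_{j})$ has $\vec{v}_{j}\neq\vec{0}$ or $\vec{w}_{j}\neq\vec{0}$; since the basis terms are linearly independent, the vanishing of $[H,\hat{C}_{i}]$ in Eq.~\ref{eq:comeq} forces $C^{M}(\vec{v}_{j}-\vec{w}_{j})=\vec{0}$ for that very tuple, and its weight is at most $k$ by hypothesis. Discarding the $\sigma^{z}$ factors (setting $\vec{y}_{j}=\vec{0}$) only lowers the weight while preserving the commutation condition, so $\vec{u}:=\vec{v}_{j}-\vec{w}_{j}$ satisfies $1\le|\mathrm{supp}(\vec{u})|\le k$ and $C^{M}\vec{u}=\vec{0}$. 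Conversely, given any such $\vec{u}$, setting $v_{i}=1$ where $u_{i}=1$ and $w_{i}=1$ where $u_{i}=-1$ yields the Hermitian $H=\bigotimes_{i}(\sigma^{+})^{v_{i}}(\sigma^{-})^{w_{i}}+\bigotimes_{i}(\sigma^{+})^{w_{i}}(\sigma^{-})^{v_{i}}$, which has $\mathcal{O}(1)$ nonzero coefficients, weight at most $k$, an off-diagonal term, and commutes with all constraints. Hence the instance is a \textsc{yes} instance if and only if such a $\vec{u}$ exists.

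The algorithm is then the brute-force enumeration of Section~\ref{sec:bounded}: for each support size $j\in\{1,\ldots,k\}$, choose a support of size $j$ in $\binom{n}{j}$ ways and a sign in $\{-1,+1\}$ for each chosen coordinate, giving at most $\sum_{j=1}^{k}\binom{n}{j}2^{j}=\mathcal{O}(n^{k})$ candidate vectors $\vec{u}$. For each candidate I evaluate the matrix-vector product $C^{M}\vec{u}$ in $\mathcal{O}(mn)$ time and test whether it vanishes, accepting as soon as one candidate passes. The total running time is $\mathcal{O}(n^{k}\,mn)$, which is polynomial in the instance size whenever $k=\mathcal{O}(1)$, placing \texttt{ILP-QCOMMUTE-k-LOCAL} in \texttt{P}.

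The step requiring the most care is the forward direction of the reduction: a genuine solution $H$ could be an arbitrary polynomially large combination of low- and high-weight terms, and I must guarantee it always contains an isolated weight-$\le k$ off-diagonal term that by itself commutes with every constraint. This is exactly where the linear independence of the $\{\mathbbm{1},\sigma^{z},\sigma^{+},\sigma^{-}\}^{\otimes n}$ basis from Theorem~\ref{algconcom} does the work, since it forces the commutator to vanish term by term and rules out any cancellation among distinct basis terms that might otherwise assemble a commuting $H$ from non-commuting pieces. Once this term-by-term vanishing is established, the weight bound on $H$ transfers directly to the extracted term and the candidate space provably stays polynomial, completing the argument.
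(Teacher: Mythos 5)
Your proposal is correct and takes essentially the same route as the paper, whose proof is simply a pointer to the brute-force enumeration of Section~\ref{sec:bounded} with the observation that the candidate space has size $n^{\mathcal{O}(1)}$ for constant $k$. You merely spell out the correctness of that reduction (term-by-term vanishing of the commutator via linear independence, and the construction of a weight-$\le k$ witness from a single vector $\vec{u}$) in more detail than the paper bothers to.
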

    
    \begin{proof}
        Apply the brute force approach described in Section~\ref{sec:bounded}. Since $ k \in \mathcal{O}(1) $, the algorithm runs in time $ n^{\mathcal{O}(1)} $.
    \end{proof}
    
    This shows that for a practical application where the Hamiltonian driver should be all local, we can tractably find such a Hamiltonian driver. Moreover, any $ H $ that is all local and commutes with the constraints can be constructed by placing the right coefficients on the found terms by brute forcing the expression in Theorem~\ref{algconcom} (they form a basis for all Hamiltonians that commute with the constraints up to weight $ k $).
    
	\section{Reachability within the Feasible Space}\label{sec:reachability}
	
	\quad In the previous section, we proved that finding a Hermitian matrix which commutes with a collection of linear spin-z constraints is NP-Complete. The related question becomes finding a Hermitian matrix which commutes with the constraints, but also connects the feasibility space. Note that two states $ \ket{ p }, \ket{ q }$ are connected if they are in the same commutation subspace of $ H $, that is $ \bra{ p } H^{r} \ket{ q } \neq 0 $ for some $ r \in \Z^{+} $. As such, for any pair of solutions $i, j$ in the feasibility space, their associated vectors in the computational basis $ \ket{i}, \ket{j}$ should be in the same commutation subspace. In general, when finding a driver Hamiltonian for an anneal that should solve an optimization task, we wish to find a driver that satisfies this condition so that we can ensure that it will be able to reach the entire feasibility space, since commuting with the constraints alone is not enough to ensure this will happen. Consider the graph partitioning example discussed in Section~\ref{sec:background} and Section~\ref{sec:probdef}, clearly $\sigma_{1}^{+} \sigma_{2}^{-} + \sigma_{1}^{-} \sigma_{2}^{+} $ commutes with the constraint, but \textit{fails} to connect the entire feasible space. For example, the state $ \ket{0011} $ is \textit{disconnected} from $ \ket{1100} $ under this driver Hamiltonian. While it succeeds not to mix solution states to \textit{nonsolution} states, it does not mix all solution states with each other. We introduce the problem \texttt{ILP-QCOMMUTE-NONTRIVIAL} which asks to find a driver term that not only commutes with the constraints, but acts \textit{nontrivially} on the feasible space, so that the action of the driver is such that there exists one solution state which the driver maps to another solution state in the feasible space. If we think about the solution states as vertices in a graph, then transitions induced by driver terms are the edges (and a driver term can induce more than one such edge). Then the problem \texttt{ILP-QCOMMUTE-NONTRIVIAL} asks that the driver term found induces at least one edge in the graph of feasible solutions. For the graph partitioning example discussed above, the term we discussed is clearly a solution to the problem \texttt{ILP-QCOMMUTE-NONTRIVIAL} since it connects $ \ket{1010}$ to $ \ket{0110}$, both of which are in the feasible space for this example. Let $ P_{i}^{b_{i}} $ be the projection operator corresponding to the energy eigenvalue $ b_{i} $ for the constraint operator $ \hat{C}_{i} $. Then this formally defines the problem \texttt{ILP-QCOMMUTE-NONTRIVIAL}:
	
	\begin{defi}[ILP-QCOMMUTE-NONTRIVIAL]
	    Given a set $ \mathcal{C} = \{ C_{1}, \ldots, C_{m} \} $ of linear constraints and constraint values $ b = \{ b_{1}, \ldots, b_{m} \} $ such that $ \hat{C}_{i} = \sum_{j=1}^{n} c_{ij} \sigma_{j}^{z} $ over a space $ \mathbbm{C}^{2^{n}} $ with $ c_{ij} \in \Z $, is there a Hermitian Matrix H, with $ \mathcal{O}(poly(n)) $ nonzero coefficients over a basis $ \{ \chi_1, \chi_2, \chi_3, \chi_4 \}^{\bigotimes n } $, such that $ \left[ H, \hat{C}_{i} \right] = 0 $ for all $ \hat{C}_{i} $ and $ P_{1}^{ b_{1} } \, \cdots \, P_{ m }^{ b_{m} } H P_{ m }^{ b_{m} } \, \cdots \, P_{ 1 }^{ b_{1} } $ has at least one off-diagonal term in the spin-z basis?
	\end{defi}
	
	\quad The main difference is that while \texttt{ILP-QCOMMUTE} required nontrivial off-diagonal terms in the spin-z basis, \texttt{ILP-QCOMMUTE-NONTRIVIAL} specifically requires these to be nontrivial in the constraint space of interest, which in general is a non-polynomial problem to verify (i.e. knowing a Hamiltonian has or fails to have an eigenvector for a specific energy level would allow one to know if a Hamiltonian has a solution to hard problems). We show that this problem is at least NP-Hard by reducing a problem closely related to \texttt{EQUAL SUBSET SUM} to \texttt{ILP-QCOMMUTE-NONTRIVIAL}. We begin with the famous NP-Complete \texttt{SUBSET SUM} problem:
	
	\begin{defi}[SUBSET SUM]
	    Given a set $ S = \{ s_{1}, \ldots, s_{n} \} $ of integers and an integer target value $ T $, is there a subset $ S_1 $ such that $ \sum_{ s \in S_1 } s = T $?
	\end{defi}
	
	While \texttt{SUBSET SUM} asks about the existence of a single solution, we are interested in at least two solutions, defining the problem:
	
	\begin{defi}[2-OR-MORE SUBSET SUM]
	    Given a set $ S = \{ s_{1}, \ldots, s_{n} \} $ and a target value $ T $, are there two  subsets $ S_1, S_2 $ such that $ \sum_{ s \in S_1 } s = \sum_{ s \in S_2 } s = T $?
	\end{defi}
	
	We show that like \texttt{SUBSET SUM} (over positive integers\cite{cormen2009introduction}), \texttt{2-OR-MORE SUBSET SUM} is also NP-Hard:
	
	\begin{lem}
	    \texttt{2-OR-MORE SUBSET SUM} is NP-Hard.
	\end{lem}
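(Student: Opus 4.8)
The plan is to reduce the standard \texttt{SUBSET SUM} problem over positive integers, which is NP-hard, to \texttt{2-OR-MORE SUBSET SUM}, reading the latter as asking for two \emph{distinct} subsets (otherwise the problem collapses back to \texttt{SUBSET SUM} and the lemma is vacuous). Given an instance $(S,T)$ of \texttt{SUBSET SUM} with $S = \{s_1, \ldots, s_n\} \subseteq \Zp$, I would construct the \texttt{2-OR-MORE SUBSET SUM} instance $(S', T)$ with the \emph{same} target $T$ and $S' = \{s_1, \ldots, s_n, s_{n+1}\}$, where the single new element has value $s_{n+1} = T$ (treating elements as indexed, so this is a legitimate fresh element even if the value $T$ already occurs among the $s_i$). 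This augmentation is linear in the input size, so the content lies entirely in checking that the construction preserves the answer.

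For the forward direction, suppose $A \subseteq \{1, \ldots, n\}$ witnesses $\sum_{i \in A} s_i = T$. Then $A$ and the singleton $\{n+1\}$ are two subsets of $S'$ that both sum to $T$, and they are distinct because $A$ never contains the index $n+1$ and $A \neq \emptyset$ (as $T > 0$). Hence $(S', T)$ is a yes-instance.

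For the converse, suppose $B_1 \neq B_2$ are distinct subsets of $S'$ each summing to $T$. The key observation, which I expect to carry the real content, is that positivity forces the only subset that both contains the new element $s_{n+1} = T$ and sums to $T$ to be exactly the singleton $\{n+1\}$: adjoining any further positive element pushes the sum strictly above $T$. Consequently, any subset summing to $T$ other than $\{n+1\}$ cannot contain index $n+1$ at all, so it is a subset of $\{1, \ldots, n\}$. Since $B_1$ and $B_2$ are distinct, at most one of them equals $\{n+1\}$, so at least one is a subset of the original $S$ summing to $T$, i.e.\ a genuine \texttt{SUBSET SUM} witness for $(S,T)$.

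The main obstacle is precisely this positivity argument: the reduction breaks if negative integers or a zero element are permitted, since then a subset containing $s_{n+1}$ could be padded back down to sum $T$ and would fail to yield an honest witness. Restricting to $S \subseteq \Zp$, where \texttt{SUBSET SUM} remains NP-hard, is exactly what makes the singleton characterization clean. Combining the two directions gives that $(S,T)$ is a yes-instance of \texttt{SUBSET SUM} if and only if $(S', T)$ is a yes-instance of \texttt{2-OR-MORE SUBSET SUM}, establishing NP-hardness.
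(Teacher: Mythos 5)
Your reduction is exactly the one the paper uses: augment $S$ with a fresh element of value $T$, keep the same target, and use positivity to force any $T$-summing subset containing the new element to be the singleton $\{T\}$, so that a second distinct witness must live inside the original $S$. The argument is correct and matches the paper's proof in both directions, including the key positivity observation.
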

	
	\begin{proof}
	
	    \quad Consider an instance of the \texttt{SUBSET SUM} problem with a set $ S = \{ s_{1}, \ldots, s_{n} \} $ and a target value $ T $ such that $ s_{i} > 0 $ for all $ s_{i} $. We construct a new instance of the \texttt{2-OR-MORE SUBSET SUM} problem with set $ S' = \{ s_{1}, \ldots, s_{n}, T \} $ and the same target value $ T $. 
	    
	    \quad First we show how to relate a solution to the original \texttt{SUBSET SUM} instance from a solution to the constructed \texttt{2-OR-MORE SUBSET SUM} problem. Let $ S_1, S_2 $ be solutions to the new \texttt{2-OR-MORE SUBSET SUM} instance, either one or none of the solutions uses the element $ T $. If neither does, either one is a solution to the instance of the original \texttt{SUBSET SUM} problem. Without loss of generality, suppose $ S_1 $ uses the value $ T $, then $ S_1 $ cannot use any other value since every other value is greater than zero, hence $ S_1 = \{ T \} $. Since $ S_2 $ cannot use any value other than $ T $ if $ T \in S_2 $ and $ S_1 \neq S_2 $, it follows that $ T \notin S_2 $. Then $ S_2 \subseteq S $ and $ \sum_{ s \in S_{2} } s = T $.
	    
	    \quad We now show how to relate a solution to the constructed \texttt{2-OR-MORE SUBSET SUM} instance given a solution to the \texttt{SUBSET SUM} instance. Let $ S_1 $ be a solution to the \texttt{SUBSET SUM} problem. Then $ S_1, \{ T \} $ is a solution to the \texttt{2-OR-MORE SUBSET SUM} instance. 
	    
	    \m
	    
	    \quad Since \texttt{SUBSET SUM} is NP-Hard over positive integers, \texttt{2-OR-MORE SUBSET SUM} is NP-Hard as well. 

	\end{proof}
	
    \quad \texttt{2-OR-MORE SUBSET SUM} is closely related to \texttt{EQUAL SUBSET SUM} because both ask about the existence of two subsets with equal sums, but \texttt{2-OR-MORE SUBSET SUM} adds the further constraint that these two subsets should have a specific sum. We show that \texttt{ILP-QCOMMUTE-NONTRIVIAL} is NP-Hard through a reduction to \texttt{2-OR-MORE SUBSET SUM}. Note again that \texttt{ILP-QCOMMUTE-NONTRIVIAL} is not verifiable in polynomial time \cite{kitaev2002classical}\cite{kempe20033}\cite{kempe2006complexity} and so this reduction is only for the decision version of the problem \texttt{2-OR-MORE SUBSET SUM}. 
	
	\begin{thm}
	    \texttt{ILP-QCOMMUTE-NONTRIVIAL} is NP-Hard.
	\end{thm}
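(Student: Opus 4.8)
The plan is to reduce \texttt{2-OR-MORE SUBSET SUM} to \texttt{ILP-QCOMMUTE-NONTRIVIAL}, exploiting the fact that fixing the constraint value $b$ pins the feasible eigenspace to exactly those computational-basis states whose supported subset sums to the target $T$. Given an instance $(S,T)$ of \texttt{2-OR-MORE SUBSET SUM} with $S = \{s_1,\ldots,s_n\}$ and $s_i > 0$, I would construct a single-constraint instance with $\hat{C} = \sum_{i=1}^n s_i \sigma_i^z$ and constraint value $b = \Sigma - 2T$, where $\Sigma = \sum_i s_i$. Since $\sigma_i^z \ket{\vec{x}} = (1 - 2x_i)\ket{\vec{x}}$, the eigenvalue of $\hat{C}$ on $\ket{\vec{x}}$ is $\Sigma - 2\sum_{i:\,x_i=1} s_i$, so $\ket{\vec{x}}$ lies in the feasible ($b$-eigen) space projected by $P = P_1^{b}$ precisely when the subset indexed by the $1$-entries of $\vec{x}$ sums to $T$. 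Distinct feasible states thus correspond to distinct subsets of $S$ each summing to $T$.

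For the forward direction, suppose $A,B \subseteq S$ are two distinct subsets, each summing to $T$, with associated feasible states $\ket{x^A}, \ket{x^B}$. I would let $\vec{v}$ be the indicator of $A \setminus B$ and $\vec{w}$ the indicator of $B \setminus A$, and take $H = \bigotimes_i (\sigma^+)^{v_i}(\sigma^-)^{w_i} + \mathrm{h.c.}$ This single term (plus its conjugate) maps $\ket{x^A} \mapsto \ket{x^B}$ and back, since $\sigma^+ = \ket{0}\bra{1}$ clears the $A$-only bits while $\sigma^- = \ket{1}\bra{0}$ sets the $B$-only bits. It commutes with $\hat{C}$ by Theorem~\ref{algconcom}, because $\vec{c}\cdot(\vec{v}-\vec{w}) = \sum_{i\in A\setminus B} s_i - \sum_{i\in B\setminus A} s_i = (T - \sum_{i\in A\cap B} s_i) - (T - \sum_{i\in A\cap B} s_i) = 0$. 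As both $\ket{x^A}, \ket{x^B}$ are feasible, $PHP$ retains the off-diagonal entry $\bra{x^A}H\ket{x^B} \neq 0$, so the constructed instance admits such an $H$.

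For the backward direction, suppose $H$ solves \texttt{ILP-QCOMMUTE-NONTRIVIAL}. Because $[H,\hat{C}]=0$, $H$ commutes with the spectral projector $P$, and hence for computational-basis states $\bra{p}PHP\ket{q} = \bra{p}H\ket{q}$ when both $\ket{p},\ket{q}$ are feasible and vanishes otherwise. An off-diagonal term of $PHP$ therefore furnishes two distinct feasible basis states $\ket{p}\neq\ket{q}$; their supporting index sets are two distinct subsets of $S$ each summing to $T$, certifying that \texttt{2-OR-MORE SUBSET SUM} has a solution. Combined with the preceding lemma that \texttt{2-OR-MORE SUBSET SUM} is NP-Hard, this establishes the theorem.

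The conceptual heart of the argument — and the point I expect to require the most care — is the role of the constraint value $b$. The earlier reduction to \texttt{ILP-QCOMMUTE} needed only the commutation condition $\vec{c}\cdot(\vec{v}-\vec{w})=0$ of Theorem~\ref{algconcom}, which is exactly an \texttt{EQUAL SUBSET SUM} condition (two disjoint subsets of equal, but otherwise unconstrained, sum). Here the additional requirement that the driver act nontrivially \emph{within} the feasible space forces the two connected states to share the prescribed eigenvalue $b$, converting ``equal sums'' into ``both sums equal the fixed target $T$'' — matching \texttt{2-OR-MORE SUBSET SUM} rather than \texttt{EQUAL SUBSET SUM}. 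Verifying that commutation collapses $PHP$ to $\bra{p}H\ket{q}$ on feasible pairs is the step that must be handled carefully, since \texttt{ILP-QCOMMUTE-NONTRIVIAL} is not polynomial-time verifiable and the argument must stay entirely at the level of the decision problem.
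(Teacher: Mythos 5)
Your proposal is correct and follows essentially the same route as the paper: a reduction from \texttt{2-OR-MORE SUBSET SUM} using the constraint $\hat{C}=\sum_i s_i\sigma_i^z$ with target value $\left(\sum_i s_i\right)-2T$, identifying feasible computational-basis states with subsets summing to $T$ in both directions. The only cosmetic difference is that you realize the driver as a single basis term $\bigotimes_i(\sigma^+)^{v_i}(\sigma^-)^{w_i}+\mathrm{h.c.}$ supported on the symmetric difference $A\triangle B$, whereas the paper writes it as $\ket{\vec{v}}\bra{\vec{w}}+\ket{\vec{w}}\bra{\vec{v}}$; your form is, if anything, slightly cleaner with respect to the polynomial bound on nonzero basis coefficients.
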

	
	\begin{proof}
	    \quad Consider an instance of the \texttt{2-OR-MORE SUBSET SUM} problem with a set $ S = \{ s_{1}, \ldots, s_{n} \} $ and a target value $ T $. Define the constraint operator $ \hat{S} = \sum_{j=1}^{n} s_{j} \sigma_{j}^{z} $ and a target energy value $ \left(\sum_{j=1}^{n} s_{j}\right) - 2 \, T $.
	    
	    \m 
	    
	    \quad Suppose this instance of \texttt{ILP-QCOMMUTE-NONTRIVIAL} has a solution. Then there are at least two eigenvectors $ \ket{\vec{v}}, \ket{\vec{w}} $ of $ \hat{S} $ with eigenvalue $ \left( \sum_{j=1}^{n} s_{j} \right) - 2 \, T $ such that the two eigenvalues can be written in the spin-z basis with $ \vec{v}, \vec{w} \in \{ 0, 1 \}^{n} $. Like with \texttt{ILP-QCOMMUTE} and \texttt{EQUAL SUBSET SUM}, the nonzero elements of $ \vec{v} $ and $ \vec{w} $ describe two sets $ S_{1}, S_{2} $ such that $ s_{i} \in S_1 $ ($ s_{i} \in S_2 $) if and only if $ v_{i} = 1 $ ($ w_{i} = 1 $). Since $ \hat{S} \ket{ \vec{ v } } = \left( \sum_{j=1}^{n} s_{j} (1 - 2 \, v_j) \right) \ket{ \vec{v} } =  \left( \left( \sum_{j=1}^{n} s_{j} \right) - 2 \, \left( \sum_{j=1}^{n} \, s_{j} v_{j} \right) \right) \ket{ \vec{v} } $, it follows that $ \sum_{j=1}^{n} s_{j} v_{j} = T $. The same logic works for $ \vec{w} $ and so $ \sum_{ s \in S_1 } s = \sum_{ s \in S_{2} } s = T $. Then \texttt{2-OR-MORE SUBSET SUM} must have a solution as well, specifically $ S_1, S_2 $.
	    
	    \m 
	    
	    \quad Suppose \texttt{2-OR-MORE SUBSET SUM} has a solution. Then there are two nonequal subsets $ S_1, S_2 $ of $ S $ such that $ \sum_{ s_i \in S_1 } s_i = \sum_{ s_i \in S_2 } s_2 = T $. Then let $ \vec{v} = ( v_1, \ldots, v_n ) $ ($ \vec{w} = ( w_1, \ldots, w_n ) $) with $ v_i = 1 $ ($ w_i = 1 $) if $ s_i \in S_1 $ ($ s_i \in S_2 $). 
	    
	    Then $ \hat{S} \ket{ \vec{v} } = \left( \sum_{j=1}^{n} s_j (1 - 2 \, v_j) \right) \ket{ \vec{v} } = \left( \left( \sum_{j=1}^{n} s_{j} \right) - 2 \, \sum_{j=1}^{n} s_j v_j \right) \ket{ \vec{v} }= \left( \left( \sum_{j=1}^{n} s_{j} \right) - 2 \, T \right) \ket{ \vec{v} } $. The same logic works for $ \ket{ \vec{w} } $ and so $ \ket{ \vec{v} }, \ket{ \vec{w} } $ are both eigenvectors of $ \hat{S} $ with eigenvalue $ \left( \sum_{j=1}^{n} s_{j} \right) - 2 \, T $. Then $ \ket{ \vec{v} } \bra{ \vec{w} } + \ket{ \vec{w} } \bra{ \vec{v} } $ is a driver term that nontrivially maps solution states of this constraint problem to one another. 
	    
	\end{proof}
	
	\quad In practical applications, we are often able to quickly find some driver terms that commute with the constraints, but then need to know whether they are sufficient to connect the entire feasible space. This raises the following question: given $ k $ driver terms (individual basis terms) that commute with the constraints, does some linear combination of them with \textit{nonzero} coefficients connect the entire feasibility space? In other words, given that we have found $k$ driver terms that commute with the constraints, can we guarantee that some linear combination of them will have the whole feasible subspace as its smallest invariant subspace? Note that not every driver term that does commute with a subspace is necessary to solve this problem. For example, in the case of the constraint $ \hat{C} = \sum_{i}^{n} \sigma_{i}^{z} $, it suffices to use the driver terms $ \sigma_{i}^{+} \sigma_{i+1}^{-} + \sigma_{i}^{-} \sigma_{i+1}^{+} $ for $ i \in [ n - 1 ] $. Any linear combination with \textit{nonzero} coefficients, $ H_{d} = \sum_{i}^{n-1} \lambda_{i} \left( \sigma_{i}^{+} \sigma_{i+1}^{-} + \sigma_{i}^{-} \sigma_{i+1}^{+} \right) $, then is a valid Hamiltonian driver to connect the feasible space. Then an extra term, like $ \sigma_{1}^{+} \sigma_{3}^{-} + \sigma_{1}^{-} \sigma_{3}^{+} $ is unnecessary, because if $ \ket{\phi} $ is in the constrained subspace, it follows that $ ( \sigma_{1}^{+} \sigma_{2}^{-} + \sigma_{1}^{-} \sigma_{2}^{+} ) \ket{\phi} $, $ ( \sigma_{2}^{+} \sigma_{3}^{-} + \sigma_{2}^{-} \sigma_{3}^{+} ) \ket{\phi} $, $ ( \sigma_{1}^{+} \sigma_{2}^{-} + \sigma_{1}^{-} \sigma_{2}^{+} )( \sigma_{2}^{+} \sigma_{3}^{-} + \sigma_{2}^{-} \sigma_{3}^{+} ) \ket{\phi} $, and $ ( \sigma_{2}^{+} \sigma_{3}^{-} + \sigma_{2}^{-} \sigma_{3}^{+} )( \sigma_{1}^{+} \sigma_{2}^{-} + \sigma_{1}^{-} \sigma_{2}^{+} ) \ket{\phi} $ are as well. Note that $ ( \sigma_{1}^{+} \sigma_{3}^{-} + \sigma_{1}^{-} \sigma_{3}^{+} ) = ( \sigma_{1}^{+} \sigma_{2}^{-} + \sigma_{1}^{-} \sigma_{2}^{+} )( \sigma_{2}^{+} \sigma_{3}^{-} + \sigma_{2}^{-} \sigma_{3}^{+} ) + ( \sigma_{2}^{+} \sigma_{3}^{-} + \sigma_{2}^{-} \sigma_{3}^{+} )( \sigma_{1}^{+} \sigma_{2}^{-} + \sigma_{1}^{-} \sigma_{2}^{+} ) $. If a Hermitian matrix $ M $ can be decomposed into a linear combination of products of operators chosen from the set of driver terms $ \{ \hat{G}_{k} \}$, and $\ket{\phi}$ is a state in the constrained space, then for any state $\ket{\psi}$ such that $\bra{\psi} M \ket{\phi} \neq 0$ (i.e., any state reachable from $\ket{\phi}$ through the action of $M$), is also reachable through the action of the driver terms in $ \{ \hat{G}_{k} \}$ for the state $\ket{\psi}$.
	
	\m 
	
	\quad The set of Hamiltonians that commute with a given set of constraints form an algebra (known as the commutant of the set of constraints). Each one of the driver terms we are considering can be seen as a generator of this commutant algebra. This leads us to define 
	the problem \texttt{ILP-QIRREDUCIBLECOMMUTE-GIVEN-k} formally: 
	
	\begin{defi}[ILP-QIRREDUCIBLECOMMUTE-GIVEN-k]
	    Given a set $ \mathcal{C} = \{ C_{1}, \ldots, C_{m} \} $ of linear constraints and constraint values $ b = \{ b_{1}, \ldots, b_{m} \} $ such that $ \hat{C}_{i} = \sum_{j=1}^{n} c_{ij} \sigma_{j}^{z} $ over a space $ \mathbbm{C}^{2^{n}} $ with $ c_{ij} \in \Z $, and a set of basis terms $ \mathcal{G} = \{ \hat{G}_{1}, \ldots, \hat{G}_{k} \} $ such that $ \hat{G}_{i} \in \{ \chi_1, \chi_2, \chi_3, \chi_4 \}^{\bigotimes n} $, does $ \mathcal{G} $ connect the entire nonzero eigenspace of the operator $ P_{1}^{ b_{1} } \cdots P_{m}^{ b_{m} } $?
	\end{defi}
	
    \quad As such, \texttt{ILP-QIRREDUCIBLECOMMUTE-GIVEN-k} asks if a given set of driver terms is able to connect the entire feasible space of a set of constraints with the given constraint values. We show that this problem is also NP-Hard by reducing \texttt{ILP-QCOMMUTE-NONTRIVIAL} to \texttt{ILP-QIRREDUCIBLECOMMUTE-GIVEN-k}. We do so by finding a mapping for any instance of \texttt{ILP-QCOMMUTE-NONTRIVIAL} to an instance of \texttt{\justify ILP-QIRREDUCIBLECOMMUTE-GIVEN-k}. Consider such an instance with constraints $ \{ C_{1}, \ldots, C_{m} \} $ and constraint values $ \{ b_{1}, \ldots, b_{m} \} $. Find an integer $ a_{1} $ such that $ \| \vec{c_{i}} \|_{1} < a_{1} $ for $ i \in [m] $. Then expand the space $ \{ x_{1}, \ldots, x_{n} \} $ by appending $ x_{n + 1}, x_{n + 2} $. Make the constraints $ F_{i}(x) = C_{i}(x_{1}, \ldots, x_{n}) + a_{1} ( x_{n + 1} + x_{n + 2} ) $ with constraint values $ b_{i} + a_{1} $. Then we can easily find a driver term $ \hat{G}_{1} =  \sigma_{n + 1}^{+} \sigma_{n + 2}^{-} + \sigma_{n + 1}^{-} \sigma_{n + 2}^{+}$ (over the basis $ \{ \mathbbm{1}, \sigma^{+}, \sigma^{-}, \sigma^{z} \} $) such that $ [ \hat{G}_{1}, \hat{F_{i}} ] = 0 $ for $ i \in [ m ] $. Since the constraint values are $ b_{i} + a_{1} $, then the \textit{only} way to satisfy the constraints is if $ (x_{n+1}, x_{n+2}) \in \{(1,0),(0,1)\}$, since $ \sum_{i=1}^{n} c_{i} x_{i} \leq \| c_{i} \|_{1} < a_{1} $ by design.
	
	\m 
	
	\quad We have thus altered the constraints such that the feasibility space of the original problem is now doubled in size by the addition of the two variables $ x_{n+1}$ and  $x_{n+2}$, if the original feasibility space was non-empty. The important structure we note here is that the feasibility subspace of the new constraints is the tensor product of the feasibility subspace of the original constraints, with the subspace generated by the states $ \ket{10} $ and $ \ket{01} $ over qubits $ x_{n+1}, x_{n+2} $. It should also be easy to recognize then that the only nontrivial action induced by the driver terms we choose over $ x_{n+1}, x_{n+2} $ is precisely the action of $ \hat{G}_{1} $.
	
	\m
	
	\quad We can apply the same procedure recursively, adding ancillas $\{x_{n+1},\ldots,x_{n+2k}\}$ and generating $k$ driver terms, such that $ a_{i} > \| \vec{c}_{i} \|_{1} + \sum_{j=1}^{i-1} a_{j} $ and $ \hat{G}_{i} = (\sigma_{n+2\,i-1}^{+} \sigma_{n+2\,i}^{-} + \sigma_{n+2\,i-1}^{-} \sigma_{n+2\,i}^{+}) $, $1\leq i\leq k$. By this construction, when restricted to the ancilla variables, the feasible subspace is spanned by the vectors $\{ \bigotimes_{i=1}^k |i_1 i_2\rangle, i_1 + i_2 = 1\}$. Given any element in this subspace, the action of the $ \hat{G}_i$ driver terms is sufficient to guarantee that any other element of this subspace can also be generated. To proceed with the reduction, we then  give the constraints $ \{ F_{1} , \ldots, F_{m}  \} $ with constraint values $ \{ b_{1} + \sum_{i = 1}^{k} a_{i}, \ldots, b_{m} + \sum_{i = 1}^{k} a_{i} \} $ respectively and the drivers $ \{ \hat{G}_{1}, \ldots, \hat{G}_{k} \} $ to our \texttt{ILP-QIRREDUCIBLECOMMUTE-GIVEN-k} solver oracle. Since our $ k $ driver terms are all over the added qubits $ x_{n +1} $ to $ x_{n + 2 \, k} $, it should be clear that these driver terms say nothing about the feasibility space of the original problem over qubits $ x_{1} $ to $ x_{n} $. Suppose we are told that our $ k $ drivers are sufficient, i.e., they can generate the whole feasible subspace by acting on any one element of that subspace. Then clearly there are no driver terms for the original \texttt{ILP-QCOMMUTE-NONTRIVIAL} decision problem, since none of the $ k $ driver terms operate over qubits associated with variables $ x_{1} $ to $ x_{n} $. Likewise if we are told our $ k $ drivers are not sufficient, then clearly there must be \textit{at least one} nontrivial driver for the original problem, since the drivers are enough to generate all elements of the feasible subspace when restricted to the ancillas. Note that this solves \texttt{ILP-QCOMMUTE-NONTRIVIAL} without giving us a token to verify it. Because this is the most general unstructured version of the problem, is it possible that a different complexity result can be found for a more structured questioning of the same problem. We note that the problem can also have a stronger complexity result, such as a relationship to a higher class in the polynomial hierarchy\cite{meyer1972equivalence,stockmeyer1976polynomial,garey1979computers}, like \texttt{\#P}\cite{valiant1979complexity}, to which it has some natural analogues. 
	
	\m
	
	\quad Given this result and the result of Section~\ref{sec:bounded}, we can often find drivers that satisfy the condition stated in Theorem~\ref{algconcom}, but may not connect the entire feasible space. Still, there remain many avenues for exploiting such terms; for example, alongside the ordinary transverse field such that universality is maintained, but with biasing towards a subspace of the solution space. This gives us a way to adjust the knob of using higher order terms when the ordinary transverse field struggles to find a solution. Such driver terms can also be beneficial for exploring new solutions using reverse annealing\cite{venturelli2019reverse,king2018observation}, especially for solutions that are higher hamming distance away since the transverse field generally struggles to find such solutions.
	
	\m 
	
	\quad Another way to leverage our result is to brute force the problem for a set of constraints over a small enough subspace that it becomes polynomially tractable. Over the other variables, we apply the usual transverse field and enforce the other constraints as penalty terms in the final Hamiltonian. These approaches can also be adopted to the constraints that are \textit{geometrically} local (like in a two dimensional grid). 
	
	\section{Conclusion}\label{conclusion}
	
	\quad In this work, we addressed the computational complexity of finding driver Hamiltonians for quantum annealing processes which aim at solving optimization or feasibility problems with several linear constraints. We develop a simple and intuitive algebraic framework for understanding whether a Hamiltonian commutes with a set of constraints or not. While this result is interesting mathematically in its own right, we mainly focus on the problem posed in Ref.\cite{hen2016driver} about algorithmically finding driver Hamiltonians for optimization problems with several linear constraints. Most significantly, the condition is useful for finding a reduction of the NP-Hard problem \texttt{EQUAL SUBSET SUMS} to finding such a driver Hamiltonian, thereby allowing us to categorize the complexity of this problem. 
	
	\m 
	
	\quad We also showed that \texttt{ILP-QCOMMUTE-NONTRIVIAL} and \texttt{ILP-QIRREDUCIBLECOMMUTE-GIVEN-k} are at least NP-Hard. But these problems could well be in a higher complexity class in the polynomial hierarchy - like \texttt{\#P}, to which \texttt{\justify ILP-QIRREDUCIBLECOMMUTE-GIVEN-k} has some similarity. However, for most common implementations the Hamiltonians are of bounded weight, and the relevant complexity class \texttt{ILP-QCOMMUTE-k-LOCAL} for a small integer $ k $ is in \texttt{P}. Hence, there is a simple brute force algorithm, as detailed in Section~\ref{sec:bounded}, to find a basis for all possible driver Hamiltonians of this bounded locality. However, the results from \texttt{ILP-QIRREDUCIBLECOMMUTE-GIVEN-k} say that given a set of driver terms, it is intractable to know whether the found basis can sustain a Hamiltonian that connects the entire feasibility space for the linear constraints. As such, we present a polynomial time algorithm that is guaranteed to find a basis for all possible Hamiltonians that commute with a set of embedded constraint operators up to a certain weight, but with no guarantees that the found Hamiltonian is able to connect the entire feasibility space that the constraints specified. However, for some important problems it is actually possible to exploit the constraint structure to guarantee that driver terms with low weight will be sufficient to reach all feasible states. This is the case, for example, for the graph coloring problem discussed in Section~\ref{sec:background} and Ref.\cite{hen2016driver}.
	
	\m
	
	\quad Our result  also applies to finding \textit{mixing} operators for Quantum Alternating Operator Ansatz (QAOA)\cite{farhi2014quantum,hadfield2019quantum}. To implement highly nontrivial driver Hamiltonians for an anneal, it also becomes necessary to find a new initial Hamiltonian that is then evolved slowly with a simple linear interpolation to the driver Hamiltonian since thermal equilibration to the driver Hamiltonian may be difficult. It then becomes relevant how  can we construct such a Hamiltonian for a given driver Hamiltonian, such that we can guarantee that we reach the right constrained space. This is a fundamental question for future research. While we have shown these problems to be NP-Hard, we have not shown what the average hardness of this class is or what the typical hardness is for instances of interest for specific applications. Especially pertinent become sets of instances in which the practical runtime for finding driver Hamiltonians remains tractable or the hardness of the problem comes from having to search a large feasible space for an optimum solution rather than pinpointing a very small feasible space. It is also interesting to note that our algebraic formulation is agnostic to the stoquasticity of the terms found. In the presented basis of Section~\ref{algcondition}, the stoquasticity of the individual basis terms, as written in Eq.~\ref{hexpand}, is determined by the amplitudes $ \alpha $ and its conjugate pair $ \alpha^{\dg} $. Commutation is \textit{invariant} under altering $ \alpha $ ($ \alpha^{\dg} $ will adjust as we alter $ \alpha $ to keep the term pair commutative). Once we have found driver terms that are suitable for a problem, it then raises the question of what effect, if any, choosing coefficients that will make them stoquastic or non-stoquastic will have on the anneal\cite{bravyi2006complexity,bravyi2010complexity,marvian2019computational,crosson2020signing}.This is another direction that requires further study. 
	
	
	\section{Acknowledgments}
	The research is based upon work (partially) supported by the Office of
the Director of National Intelligence (ODNI), Intelligence Advanced
Research Projects Activity (IARPA) and the Defense Advanced Research Projects Agency (DARPA), via the U.S. Army Research Office
contract W911NF-17-C-0050. The views and conclusions contained herein are
those of the authors and should not be interpreted as necessarily
representing the official policies or endorsements, either expressed or
implied, of the ODNI, IARPA, or the U.S. Government. The U.S. Government
is authorized to reproduce and distribute reprints for Governmental
purposes notwithstanding any copyright annotation thereon.
	
	\pagebreak

	\bibliographystyle{IEEEtran}
	\bibliography{main}

	\appendix
    \section{0-1-LP-QCOMMUTE is NP HARD}\label{sec:redoxold}
	
	\quad We reduce the \texttt{0-1-LP-QCOMMUTE} problem to the \texttt{EQUAL SUBSET SUM} problem. We define the \texttt{EQUAL SUBSET SUM} problem as before:
	
	\begin{defi}{EQUAL SUBSET SUM}
		Given a set $ S = \{ s_{1}, s_{2}, \ldots, s_{n} \} $, with $ s_{i} \in \Z^{+} $, find two non-empty disjoint subsets, $ A, B $ such that $ \sum_{a_{i} \in A} a_{i} = \sum_{b_{i} \in B} b_{i} $.
	\end{defi}
	
	\quad The \texttt{EQUAL SUBSET SUM} problem is known to be NP-Complete\cite{woeginger1992equal}. We map an instance of the \texttt{EQUAL SUBSET SUM} problem to the \texttt{0-1-LP-QCOMMUTE} problem; the former of which is defined over a set $ S = \{ s_{1}, s_{2}, \ldots, s_{n} \} $, with $ s_{i} \in \Z^{+} $. In order to connect \texttt{EQUAL SUBSET SUM} with solving a linear system over discrete variables (the key of Theorem \ref{algconcom}), we will associate an assignment of integers in $S$ to the two subsets $A$ and $B$ with a function $ u $  over $ S $ such that $ u = \{ u_{1}, \ldots, u_{n} \} $ with  $ u_{i} \in \{ -1, 0, 1 \} $. We associate the value $ u_{i} $ as the assignment $ u $ gives to integer $ s_{i} $. Slightly abusing notation, this defines a function on any subset $ M = \{ s_{m_{1}}, s_{m_{2}}, \ldots, s_{m_{|M|}} \} $ such that $ u(M) = \{ u_{m_{1}}, u_{m_{2}}, \ldots, u_{m_{|M|}} \} $. When discussing a subset of a single element $ s_{e} $, we also abuse notation to allow for $ u(s_{e}) = u_{e} $. We can then define an integer valued function $ E_{S} (u) = \sum_{s_{i} \in S}^{n} u_{i} s_{i} $. If we associate integers $s_i$ such that $u_i = 1$ with integers in subset $A$, and those with $u_i = -1$ with integers in subset $B$, then we can rewrite it as $E_{S} (u) = \sum_{s_i \in A} s_i - \sum_{s_i \in B} s_i $ (note that $u_i = 0$ means that the corresponding integer is not chosen for any of the two subsets). Then, \texttt{EQUAL SUBSET SUM} has a solution if and only if there is an assignment function $u$ with a nontrivial image such that $ E_{S} (u) = \sum_{i}^{n} u_{i} s_{i} = 0 $.
    
    \m 

    \quad However, we \textit{need} a vector representation to exploit the structure of Theorem \ref{algconcom}. Let $ s_{\text{max}} $ be the maximum of $ S $. We define $ S^{M} $ as the matrix with the binary representations of $ S $ as its column vectors. Given $ s_{j} \in S $, we define entry $ s_{ij}^{M} = s_{j}^{i} $  -- referring to the $ i $-th bit of integer $ s_{j} $. This defines a $ m \times n $ matrix with $ m = \lceil \log( s_{max} ) \rceil $. 
    
    \m 
    
    \quad The idea is that we wish to give each integer an associated binary vector such that multiplying a binary vector with $ S^{M} $ corresponds to \textit{selecting} that integer to participate in a sum. We refer to the vectorized form of $ u $ as $ \vec{\mu} \in \{ -1, 0, 1 \}^{n} $ such that $ \vec{\mu} = ( u_{1}, \ldots, u_{n} ) $. Since multiplying a matrix by a vector on the right results in a linear combination of the matrix columns, with the coefficients being the corresponding components of the vector, it would be tempting to assume that $ E_{S} (u) =  S^{M} \, \vec{\mu}$, since the columns of $ S^{M} $ are associated with the integers in $S$. Then we would have something like $ E_{S} (u) = 0$ if and only if $S^{M} \, \vec{\mu} = \vec{0}$, providing our desired connection between Theorem \ref{algconcom} and \texttt{EQUAL SUBSET SUM}.
    
    \m 
    
    \quad Unfortunately this does not work, since the columns of $S$ contain a binary representations of the integers $s_i$, while the expression $ E_{S}(u) $ refers to the usual addition of integers, and not bit component wise addition. To illustrate what we mean with this, consider the (improper) set $  S = \{ 1, 1, 2 \} $ which delineates:
    \begin{align*}
        S^{M} = \; \begin{array}{ccccc}
                    & s_1 & s_2 & s_3 &     \\
\ldelim({2}{0.5em}  & 1 & 1 & 0 & \hspace{-0.2em}\rdelim){2}{0.5em}  \\
                    & 0 & 0 & 1 &     \\
                    \end{array}
	\end{align*}
    
    Even though the associated \texttt{EQUAL SUBSET SUM} problem has a simple solution associated with the function $ u = \{ 1, 1, -1 \} $ (assign the first two integers to subset $A$ and the third to subset $B$), a simple calculation shows that $ S^{M} \, \vec{\mu} = (2, -1) \neq \vec{0}$. From the example above we can see that what we are missing is a way of incorporating the ``bit carry" that occurs in binary addition into the operations of regular matrix-vector multiplication.  The main goal of this appendix is to show how this can be accomplished by embedding these matrix operations into a larger vector space.
	
    \m
	
	
	\m 
	
	\quad In order to resolve this issue, we will introduce a mechanism to do \textit{generalized} bit addition - bit addition that is generalized to when the bit values can both be positive and negative as well as zero. We add ancillary bits $ \mathscr{A} $ such that $ u^{\ast} $ is the assignment $ u $ expanded to this new space $ S \cup \mathscr{A} $ as $ u^{\ast} = \{ u_{1}, \ldots, u_{n}, u_{n+1} ,\ldots, u_{n+|\mathscr{A}|} \} $. Slightly abusing notation, for any subset $ M = M_{S} \cup M_{\mathscr{A}} $ with $ M_{S} = \{ s_{ms_{1}}, \ldots, s_{ms_{|M_{S}|}} \} $ and $ M_{\mathscr{A}} = \{ a_{ma_{1}}, \ldots, a_{ma_{|M_{\mathscr{A}}|}} \} $, we define $ u^{\ast}(M) = \{ u_{ms_{1}}, \ldots, u_{ms_{|M_{S}|}}, a_{ma_{1}}, \ldots, a_{ma_{|M_{\mathscr{A}}|}} \} $. We construct new constraints $ \mathscr{K} $ such that $ E_{S}(u) = 0 \mim E_{\mathscr{K}}(u^{\ast}) = 0 $. Moreover, $ u^{\ast} $ will allow for a vectorized form $ \vec{\mu^{\ast}} $ and $ \mathscr{K} $ a matrix $ K^{M} $ (see \ref{sec:simplereduced}) such that $ E_{\mathscr{K}}(u^{\ast}) = 0 \mim K^{M} \, \vec{\mu^{\ast}} = 0 $. Intuitively, $ u^{\ast} $ \textit{picks} coefficients for values over $ S $ and is subsequently \textit{forced} to take values on $ \mathscr{A} $ corresponding to doing valid bit addition and only satisfies $ \mathscr{K} $ if the bit entries of the total sum is indeed zero. Fig.~\ref{fig:reduxflowchart} gives a visual description of the steps used to create our full reduction. As such, for a given set of integers $ S $, we follow the reduction to construct a \textit{binary matrix} $ K^{M} $ such that the row vectors of $ K^{M} $ define the constraint operators $ \hat{K}_{i} = \sum_{j=1}^{|S \cup \mathscr{A}|} k_{ij}^{M} \sigma_{j}^{z} $. This serves as the input binary LP to the oracle solver of \texttt{0-1-LP-QCOMMUTE} to tell us if a Hamiltonian $ H $ exists such that $ H $ has an off-diagonal term in the spin-z basis that shows the existence of $ \vec{v}, \vec{w} $, which describe two subsets $ A $ and $ B $ as the solution to the given \texttt{Equal Subset Sum} problem.
	
	\begin{figure}
	\centering
 \makebox[\textwidth][c]{\includegraphics[height=0.4\textheight]{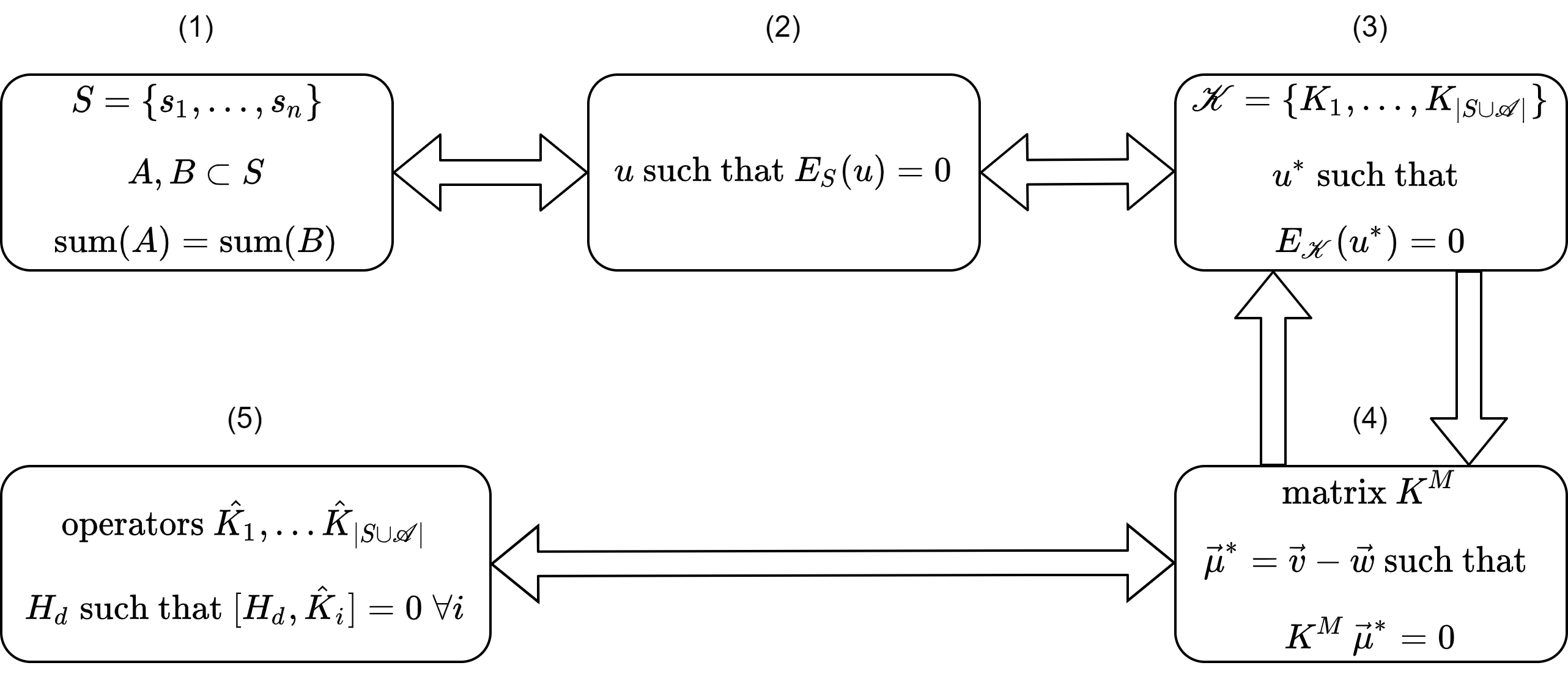}}
	\caption{A flow cart describing how our reduction works, we recommend motivated readers refer back to it as they read the reduction. An instance of \texttt{EQUAL SUBSET SUM} (box 1) is mapped into a binary constraint representation such that the sum function $ E $ defined over the assignment $ u $ is equivalent to $\text{sum}(A) - \text{sum}(B)$ where $ u $ assigns variables to either $ A $, $ B $, or they are not used (box 2). To exploit Theorem~\ref{algconcom}, constraints $ C $ are mapped to constraints $ \mathscr{K} $ (box 3), such that assignment $ E_{\mathscr{K}}(u^{\ast}) = 0 \mim E_{S}(u^{\ast}) = 0 $. Unlike $ S $, $ \mathscr{K} $ allows for a simple matrix representation such that $ K^{M} \, \vec{\mu^{\ast}} = 0 \mim E_{\mathscr{K}}(u^{\ast}) = 0 $ (box 4), where $ \vec{\mu} $ is a naive vectorized form of $ u^{\ast} $. Note that if u exists such that $ E_{S}(u) = 0 $, then many $ u^{\ast} $ exist such that $ E_{\mathscr{K}}(u^{\ast}) = 0$, but each reduces to the \textit{same} $ u $. The constraint version of $ K^{M} $ can be embedded row-wise to define operators $ \hat{K}_{1}, \ldots, \hat{K_{S \cup \mathscr{A}}} $ as $ \hat{K}_{i} = \sum_{j=1}^{|S\cup\mathscr{A}|} k_{ij}^{M} \sigma_{j}^{z} $ such that a \texttt{0-1-LP-QCOMMUTE} oracle solves to show the existence of a driver Hamiltonian $ H_{d} $, which we can interpret back to see there must be a solution to \texttt{EQUAL SUBSET SUM} as well.}
 	\label{fig:reduxflowchart}
	\end{figure}
	
	\subsection{Generalized Full Adder}\label{subsec:genadd}
	
	\begin{table}
		\centering
		\begin{tabular}{ c  c | c  c | c c }
			\multicolumn{2}{c}{Inputs} & \multicolumn{4}{c}{Output}	\\
			\hline
			&		&	\multicolumn{2}{c}{Primary} & \multicolumn{2}{c}{Secondary} 	\\
			\hline
			a   &   b   &   s   &   c   &	s	&	c	\\
			\hline 
			-1  	&  -1   &   0   &  -1   &	-	&	-	\\
			-1  	&   0   &  -1   &   0   &	1	&	-1	\\
			-1  	&   1   &   0   &   0   &	-	&	-	\\
			0   &  -1   &  -1   &   0   &	1	&	-1	\\
			0   &   0   &   0   &   0   &	-	&	-	\\
			0   &   1   &   1   &   0   &	-1	&	1	\\
			1   &  -1   &   0   &   0   &	-	&	-	\\
			1   &   0   &   1   &   0   &	-1	&	1	\\
			1   &   1   &   0   &   1   &	-	&	-	\\
		\end{tabular}
		\caption{The generalized full adder; if $ u^{\ast} $ takes a particular value on inputs $ a $ and $ b $, then $ u^{\ast} $ will  be forced to take the corresponding sum (represented by $ s $) and carry (represented by $ c $) values. In the case that $ a + b $ is not a power of two, $ s $ and $ c $ have two possible values they can take. Here primary (secondary) operations correspond to the operations where the carry is set to zero (nonzero) if possible. }
		\label{tab:genadd}
	\end{table}
	
	\quad In this section we describe how to build the basis for our reduction, which is to find a matrix such that the values $ u^{\ast} $ takes on the set $ S $ are added bitwise over the ancillary bits $ \mathscr{A} $. There will be specific ancillary bits such that the total sum that $ u^{\ast} $ takes on $ S $ can be deduced from its value on these bits. Consider again the simple example we introduced in the previous section. We will add ancillary variables such that their values are \textit{forced} to be what is dictated by the bit addition of values in $ S $. This can be summarized in Table \ref{tab:genadd}. If $ u^{\ast} $ takes a particular value on two inputs $ a $ and $ b $, then the table describes what value $ u^{\ast} $ will be \textit{forced} to take on new ancillary values $ s $ and $ c $ (representing the sum and carry bits respectively).
	
	\m 
	
	\quad Like the ordinary adder, the generalized adder accepts all values such that $ u^{\ast}( a ) + u^{\ast}( b ) = 2 \, u^{\ast}( c ) + u^{\ast}( s ) $ except now $ u^{\ast}( x ) \in \{ -1, 0, 1 \} $ for any $ x $ and so $ u^{\ast}( a ) + u^{\ast}( b ) \in \{ -2, -1, 0, 1, 2 \} $. Note that then the carry bit and the sum bit are not unique like in the case of the ordinary full adder.	For example if $ u^{\ast}( a ) = 1 $ and $ u^{\ast}( b ) = 0 $, then it is possible that $ u^{\ast}( c ) = 0 $ and $ u^{\ast}( s ) = 1 $ like in the ordinary adder, but \textit{also} that $ u^{\ast}( c ) = 1 $ and $ u^{\ast}( s ) = -1 $. Since $ 2 \, u^{\ast}( c ) + u^{\ast}( s ) $ is the same value for both, they are both technically valid. The operations keen to the ordinary full adder we refer to as \textit{primary} and those that do not as \textit{secondary}. When possible, a primary operation will set the carry bit to zero while a secondary operation will set the carry bit to either one or negative one. One may hope that we could \textit{force} the \textit{primary} mode of operation, but we could not construct a 0-1 matrix that could force these modes of operations over the \textit{secondary} modes since our condition for satisfaction is through equivalence statements like $ u^{\ast}( a ) + u^{\ast}( b ) = 2 \, u^{\ast}( c ) + u^{\ast}( s ) $, but no equivalence statement can state a preference in representation. While it does not affect the correctness of our result, it does mean that the number of solutions is not preserved in our reduction - there are many valid $ u^{\ast} $ that reduced to a single $ u $. The reduction is therefore not parsimonious. 
	
	\m
	
	\quad To enforce the generalized adder between two inputs and two outputs we need to generate the correct submatrix. Given inputs $ a $ and $ b $, we define the matrix on $ a, b, s, c, x_{1}, x_{2}, x_{3} $ - with $ x_{1}, x_{2}, x_{3} $ being intermediating ancillas - as:
	\begin{align}
        GA^{M} = \; \begin{array}{c c c c c c c c c}
                    &   a & b & x_1 & x_2 & x_3 & c & s &     \\ 
\ldelim({6}{0.5em}  &   a  &  b  &   1  &   1   &   1   &   0   &   0   & \hspace{-0.2em}\rdelim){6}{0.5em}  \\
	                &   0  &  0  &   1  &   0   &   1   &   1   &   1   &	\\
	                &   0  &  0  &   0  &   1   &   1   &   1   &   1   &	\\
	                &   0  &  0  &   1  &   0   &   0   &   1   &   0   &	\\
	                &   0  &  0  &   0  &   1   &   0   &   1   &   0	&	\\
	                &   0  &  0  &   0  &   0   &   1   &   0   &   1	&	
                    \end{array}
	\label{gamat}
	\end{align}

	As constraints, we can write it as:
	\begin{align}
	GA_{1}( a, b, x_{1}, x_{2}, x_{3})  &= 0, 	\\
	GA_{2}( x_{1}, x_{3}, s, c)         &= 0, 	\\
	GA_{3}( x_{2}, x_{3}, s, c )        &= 0, 	\\
	GA_{4}( x_{1}, c )                  &= 0, 	\\
	GA_{5}( x_{2}, c )                  &= 0, 	\\
	GA_{6}( x_{3}, s )                  &= 0.
	\end{align}
	
	\quad For every generalized adder in Fig.~\ref{fig:fullcircuit} (as described in the protocol we gave in Section~\ref{subsec:genadd}), we have a submatrix over the corresponding variables. We give a simple case by case proof that $ GA^{M} $ enforces $ u^{\ast} $ to be valid if and only if its entries satisfy $ 2 \, u^{\ast}( c ) + u^{\ast}( s ) = u^{\ast}( a ) + u^{\ast}( b ) $ as seen in Fig.~\ref{tab:genadd} in Appendix~\ref{proofmatrix}. 
	
	\subsection{The Simple Reduced Case}\label{sec:simplereduced}
	
	\quad Before we move on to give a general protocol for any given problem, we consider the simple case we described earlier with the integer (improper) set $ S = \{ 1, 1, 2 \} $. We give a slightly reduced description for this problem to show what the reductions typically look like. We implement a generalized adder for the bits $ s_{1}^{1} $ and $ s_{2}^{1} $  - introducing the ancillary bits $ k_{1}^{1}, z_{1}^{1} $ that are the corresponding carry and sum bit. We then implement a generalized adder for the bits $ s_{3}^{2} $ and $ k_{1}^{1} $ - introducing the ancillary bits $ k_{2}^{1}, z_{2}^{1} $ that the corresponding carry and sum bit. As such, $ E_{S}(u^{\ast}) = 0 \mim u^{\ast}( z_{1}^{1} ) = u^{\ast}( z_{2}^{1} ) = u^{\ast}( k_{2}^{1} ) = 0 $, since the latter condition is equivalent to saying that the bitwise sum of the two sets is zero. This is represented in Fig.~\ref{fig:examplecircuit}A. Each box in the diagram refers to a generalized full adder. In words, we add the assignments $ u_{1}^{\ast}, u_{2}^{\ast} $ of the bits $ s_{1}^{1}, s_{2}^{1} $ and add the respective carry bit assignment with the assignment $ u_{3}^{\ast} $ on the bit $ s_{3}^{2} $. The resulting integer is given by $ E_{\{z_1^1, z_2^1, k_2^1\}}\left( u^{\ast}( \{z_1^1, z_2^1, k_2^1\} ) \right) = u^{\ast}( z_1^1 ) + 2 \times u^{\ast}( z_2^1 ) + 4 \times u^{\ast}( k_2^1 ) $ - the first row sum bit, the second row sum bit, and what can be considered the third row sum bit added with their respective power of two. This must be \textit{zero} if $ u^{\ast} $ defines subsets of equal sums and therefore $ u^{\ast} $ must be zero on each of them.
	
	\begin{figure}
		\begin{tabular}{ c c } 
			\includegraphics[height=0.3\textheight]{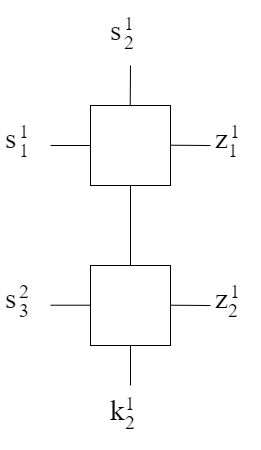} & \includegraphics[height=0.3\textheight]{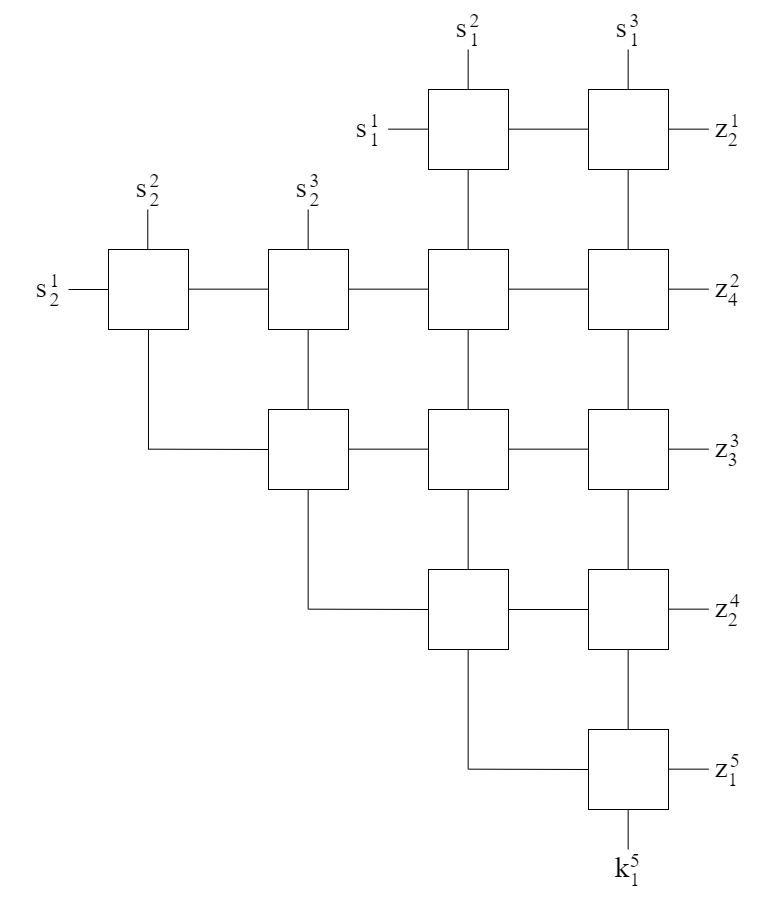} \\
			A	& 	B
		\end{tabular}
		\caption{Subfigure A shows the reduced embedding of the \texttt{EQUAL SUBSET SUM} instance with the (improper) integer set $ \{ 1, 1, 2 \} $. Each box represents a generalized full adder. Each adder describes a corresponding submatrix in the matrix $ \tilde{K}^{M} $ (check Eq.~\ref{eq:kexample}). Subfigure B shows the full embedding of the same instance.}
		\label{fig:examplecircuit}
	\end{figure}
	
	\m 
	
	\quad The resulting matrix $ \tilde{K}^{M} $ - here we use a tilde to signify that we are in the reduced construction case - that this process defines can be represented as: 
	\begin{align}\label{eq:kexample}
    \tilde{K}^{M} = \; \begin{array}{ c c c c | c c c | c c | c c c | c c c }
                    & s_1^1 & s_2^1 & s_3^2 & x_1^1 & x_1^2 & x_1^3 & k_1^1 & z_1^1 & x_2^1 & x_2^2 & x_2^3 & k_2^1 & z_2^1 &      \\
\ldelim({15}{0.5em} &   1   	&   1   	&   0   	&   1   	&   1   	&   1   	&   0   	&   0   	&   0   	&   0   	&	0		&   0   	&   0   & \hspace{-0.2em}\rdelim){15}{0.5em}  		\\
&   0   	&   0   	&   0   	&   1   	&   0   	&   1   	&   1   	&   1   	&   0   	&   0   	&	0		&   0   	&   0   &			\\
&   0   	&   0   	&   0   	&   0   	&   1   	&   1   	&   1   	&   1   	&   0   	&   0   	&	0		&   0   	&   0   &			\\
&   0   	&   0   	&   0   	&   1  		&   0   	&   0   	&   1   	&   0   	&   0   	&   0   	&	0		&   0   	&   0   &			\\
&   0   	&   0   	&   0   	&   0   	&   1   	&   0   	&   1   	&   0		&   0   	&   0   	&	0		&   0   	&   0   & 			\\
&   0   	&   0   	&   0   	&   0  		&   0   	&   1   	&   0   	&   1		&   0   	&   0   	&	0		&   0   	&   0   &			\\ 
	\cline{2-14}
&   0   	&   0   	&   1   	&   0  		&   0   	&   0   	&   1   	&   0		&   1   	&   1   	&	1		&   0   	&   0   & 			\\
&   0   	&   0   	&   0   	&   0  		&   0   	&   0   	&   0   	&   0		&   1 		&   0   	&	1		&   1   	&   1   & 			\\
&   0   	&   0   	&   0   	&   0  		&   0   	&   0   	&   0   	&   0		&   0 		&   1   	&	1		&   1   	&   1   & 			\\
&   0   	&   0   	&   0   	&   0  		&   0   	&   0   	&   0   	&   0		&   1 		&   0   	&	0		&   1   	&   0   & 			\\
&   0   	&   0   	&   0   	&   0  		&   0   	&   0   	&   0   	&   0		&   0 		&   1   	&	0		&   1   	&   0   & 			\\
&   0   	&   0   	&   0   	&   0  		&   0   	&   0   	&   0   	&   0		&   0 		&   0   	&	1		&   0   	&   1   & 			\\
	\cline{2-14}
&   0   	&   0   	&   0   	&   0  		&   0   	&   0   	&   0   	&   1		&   0 		&   0   	&	0		&   0   	&   0   & 			\\
&   0   	&   0   	&   0   	&   0  		&   0   	&   0   	&   0   	&   0		&   0 		&   0   	&	0		&   1   	&   0   & 			\\
&   0   	&   0   	&   0   	&   0  		&   0   	&   0   	&   0   	&   0		&   0 		&   0   	&	0		&   0   	&   1   & 		
                \end{array}
	\end{align}
	
	\quad One can check that if $ \vec{ \mu^{\ast} } = ( 1, 1, -1, -1, -1, 0, 1, 0, 0, 0, 0, 0, 0 ) $, then $ \tilde{K}^{M} \, \vec{ \mu^{\ast} } = \vec{ 0 } $. The vector $ \vec{ \mu^{\ast} } $ defines the assignment $ u^{\ast}( S ) = \{ 1, 1, -1 \} $ - since $ s_{1}, s_{2}, s_{3} $ are the first three entries of the vectorized form. This defines the two sets $ A = \{ s_{1}, s_{2} \} $ and $ B = \{ s_{3} \} $ as a solution to the \texttt{EQUAL SUBSET SUM} problem posed. One can check that $ \vec{ \mu^{\ast} } = (1,-1, 0, 0, 0, 0, 0, 0, 0, 0, 0, 0, 0) $ is also a solution, corresponding to the sets $ A = \{ 1 \} $ and $ B = \{ 1 \} $. 
	
	\m 
	
	\quad In this reduced construction, we only used the generalized full adder for the significant bits of each $ s_{j} \in S $ for a given bit entry $ i $. This helps to greatly reduce the size of the resulting embedding, but hopefully still conveys the principal idea behind our reduction. While we could write a general protocol on the same principle, it requires a more involved strategy than the one we take.
	
	\subsection{The Simple Unreduced Case}
	
	\quad To simplify the construction of the embedding at the cost of increasing their corresponding size, we follow the same logic as before, but do not \textit{prune} the insignificant bits. In the unreduced construction, we ``compute'' the sum bit by bit. Like in the reduced case, the resulting sum bit at the end of each layer corresponds to a bit entry that the sum defined by $ u^{\ast} $ takes - remember, in the end, the value of $ \sum_{a \in A} a - \sum_{b \in B} b $ was described bitwise by the value $ u^{\ast} $ took on the last sum bit in each layer plus the last layer's carry bit (e.g. $ E_{S}(u^{\ast}(S)) = u^{\ast}( z_{1} ) + 2 \times u^{\ast}( z_{2} ) + 4 \times u^{\ast}( k_{2} ) $). This remains the same in the unreduced representation. Note that nonzero sums can have multiple bit representations when the entries can be negative or positive, i.e. $ 1 = -1 \times 1 + 1 \times 2 = 1 \times 1 + 0 \times 2 $, while the sum $ 0 $ has only one. 
	
	\m
	
	\quad In words, we add the bits of each integer in the corresponding bit entry as well as all carry bits from the previous layer to find the total sum of all the integers if none of them had significant bits beyond this layer. Let $ u^{\ast}( z_{end}^{i} ) $ be the last sum bit for any row $ i $. Then the total sum up to the current layer $ i $ can be written as $ 2^{i} \left( 2 \times q + s \right) + \sum_{j=1}^{i} 2^{j-1} u^{\ast}( z_{end}^{j} ) $ for some $ q $. Then we identify $ s $ as the last sum bit of the current layer, $ u^{\ast}( z_{end}^{i} ) $ and $ q $ as the \textit{net} number of carries passed from layer $ i $ to $ i + 1 $.
	
	\m 
	
	\quad Consider again the simple case we described earlier. We have a (improper) set $ S = \{ 1, 1, 2 \} $, such that we can identify each of these three values as $ s_{1}, s_{2}, s_{3} $. Refer to Fig.~\ref{fig:examplecircuit}B to see the resulting diagram that this construction will give. Then $ s_{1}^{1}, s_{2}^{1}, s_{3}^{1} $ are the \textit{first} bits of each of these three. We use generalized full adders to add - bit by bit - the values $ s_{1}^{1}, s_{2}^{1}, s_{3}^{1} $ and feed the resulting carry bits $ k_{1}^{1}, k_{2}^{1} $ to the next layer while $ z_{2}^{1} $ takes the value of the \textit{lowest bit entry} for the total sum of the assignment. In the \textit{second} layer we add - bit by bit - the values $ s_{1}^{2}, s_{2}^{2}, s_{3}^{3}, k_{1}^{1}, k_{2}^{1} $ and feed the resulting carry bits $ k_{1}^{2}, k_{2}^{2}, k_{3}^{2}, k_{4}^{2} $ to the next layer while $ z_{2}^{4} $ takes the value of the \textit{second} lowest bit entry for the total sum. Since the maximum bit entry was given in row two, row \textit{three} adds \textit{only} the carry bits $ k_{1}^{2}, k_{2}^{2}, k_{3}^{2}, k_{4}^{2} $, which generates the carry bits $ k_{1}^{3}, k_{2}^{3}, k_{3}^{3} $ that are subsequently fed into layer \textit{four} while $ z_{4}^{3} $ is the \textit{third} lowest bit entry for the total sum. Layer \textit{four} adds the carry values and generates the corresponding carry bits $ k_{1}^{4}, k_{2}^{4} $ as well as the sum bit $ z_{2}^{4} $. Lastly, layer five adds these values and generates the corresponding carry bit $ k_{1}^{5} $ as well as the last sum bit $ z_{1}^{4} $. To complete our description, each layer has internal sum variables from each generalized adder. Every line in the diagram corresponds to a variable; variables that are between two boxes are intermediaries, such as all the carry bits except $ k_{1}^{5} $ and all the sum bits except the last ones in each layer. For example, layer one has $ z_{1}^{1} $ - an intermediate sum bit that is passed from the first generalized adder to the \textit{second}. This is in contrast to $ z_{2}^{1} $, which is the sum bit of the \textit{second} generalized adder and is the lowest bit entry for the total sum of the assignment. All carry bits except for the very last one - the one from the single generalized adder in the last row - are intermediaries. Variables that are not between boxes are determined; $ s_{i}^{j} $ are set to the $ j $-th lowest bit in the $ i $-th integer of the set $ S $ while $ z_{end}^{i} $ for layer $ i $ and $ k_{1}^{5} $ are set to one in the corresponding matrix. 
	
	\subsection{The General Unreduced Case}\label{subsec:genunreduced}
	
	\quad Before we turn our attention to a full protocol for the general unreduced case, we give a more intuitive and visual description of the reduction. Fig.~\ref{fig:fullcircuit} gives a schematic of what the general case looks like. Note that Fig.~\ref{fig:examplecircuit}B fits precisely this description as well.
	
	\begin{figure}[h!]
		\centering
		\makebox[\textwidth][c]{\includegraphics[width=1.2\textwidth]{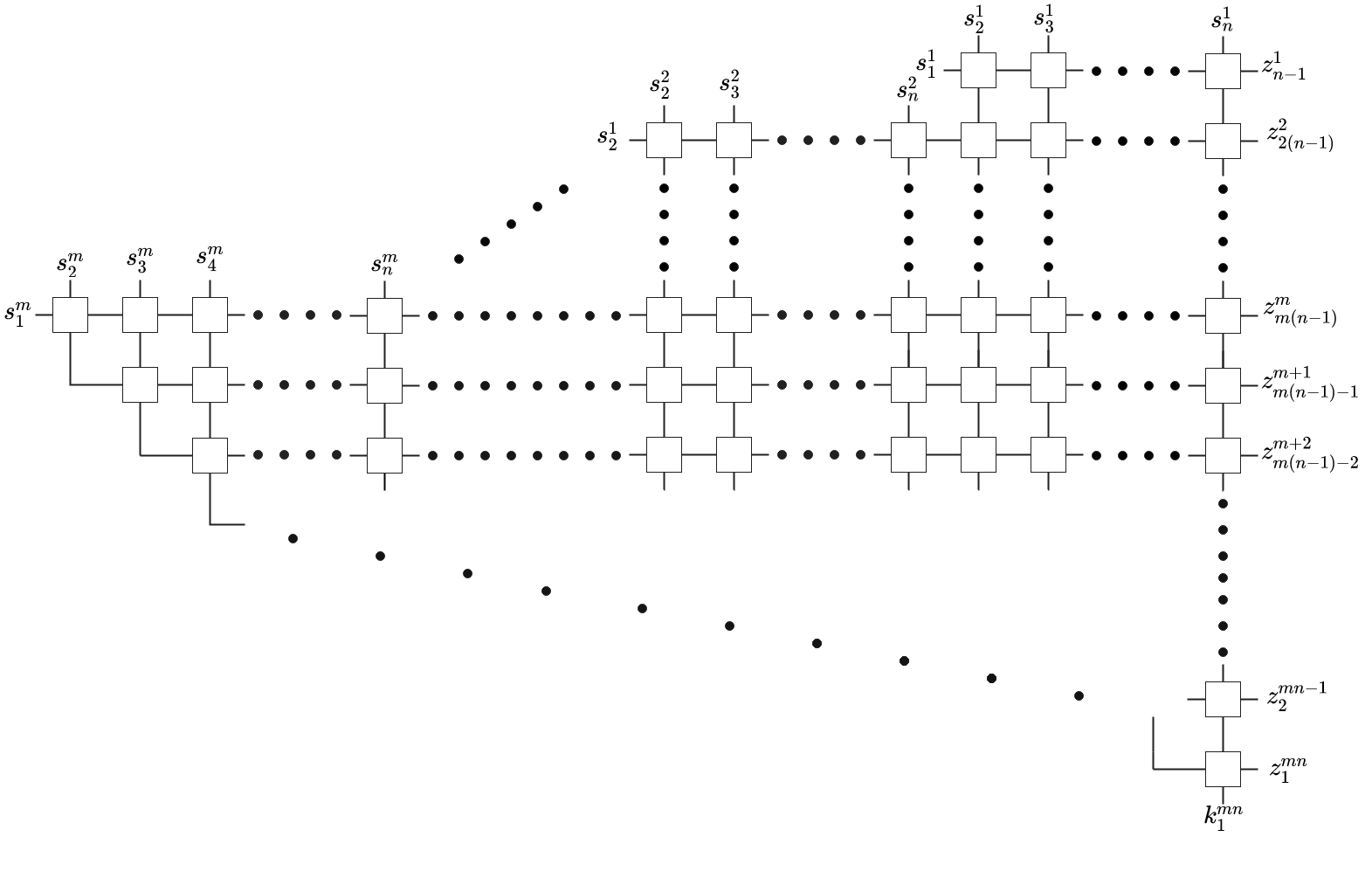}}
		\caption{This figure shows the layout of the generalized complete adder for enforcing that $ u^{\ast} $ is only valid if the corresponding $ u $ on $ S = \{ s_{1}, \ldots, s_{n} \} $ is also valid. In each row, a box corresponds to a generalized adder (with the truth table given in Table \ref{tab:genadd}) where the output of that whole row (labeled by the sum bit $ z $) is zero if and only if $ u^{\ast} $ is valid. After $ m $ (the largest bit length of any $ s_{i} \in S $) rows, the next rows are fed only carries from the previous rows. As such, the number of generalized adders decreases by one, until the very last row, where we have that $ z_{1}^{mn} $ and $ k_{1}^{mn} $ should both be zero for $ u^{\ast} $ to be valid on the set. The final constraint matrix is a representation of this diagram, with each generalized adder representing a submatrix that enforces the relationship shown in Table \ref{tab:genadd} (check Eq.~\ref{gamat}). }
		\label{fig:fullcircuit}
	\end{figure}
	
	\quad We call the generalized function with the truth table corresponding to Table \ref{tab:genadd} as $ GA_{s} $ and $ GA_{k} $ for the sum and carry bit respectively, and so the constraints we considered earlier enforce: $ u^{\ast}(c) = GA_{c}(u^{\ast}(a), u^{\ast}(b)) $ and $ u^{\ast}(s) = GA_{s}( u^{\ast}(a), u^{\ast}(b) ) $. We use the common convention of writing $ u^{\ast}( a_1, \ldots, a_k ) $ as a condensed form of $ ( u^{\ast}( a_1 ), \ldots, u^{\ast}( a_k ) ) $ so that $ GA_{c}( u^{\ast}( a ), u^{\ast}( b ) ) \equiv GA_{c}( u^{\ast}( a, b )) $. These are not proper functions since $ GA_{s} $ and $ GA_{k} $ sometimes have two valid modes of operation. We also define $ GA_{s}( u^{\ast}( s_{1:k} ) ) = GA_{s} \left( GA_{s} \left( \ldots GA_{s} \left( u^{\ast}( s_{1} ), u^{\ast}( s_{2} ) \right), \ldots \right), u^{\ast}( s_{n} ) \right) $ to help condense our writing.
	
	\m 
	
	\quad To enforce the right bit addition, we use the following protocol:
	\begin{problems}
		
		\item Let $ l = 1 $, $ \mathscr{K} = \O $, and $ \mathscr{A} = \O $.
		
		\item Generate $ (n-1)l $ carry bits ($ k_{1}^{l} , \ldots, k_{(n-1)l}^{l} $) and append them to $ \mathscr{A} $ as well as $ (n-1)l $  sum bits ($ z_{1}^{l}, \ldots, z_{(n-1)l}^{l} $) and append them to $ \mathscr{A} $. Add $ n-1 $ constraints to $ \mathscr{K} $ that will enforce $ GA $ between $ s_{1}^{l}, \ldots s_{n}^{l} $ in order, such that for any assignment $ u^{\ast} $, $ u^{\ast} $ is valid if and only if
		\begin{align}
		u^{\ast}( k_{1}^{l} )   &=  GA_{k}( u^{\ast}( s_{1}^{l} 	,	s_{2}^{l} ) )                                       \\
		u^{\ast}( z_{1}^{l} )   &=  GA_{s}( u^{\ast}( s_{1}^{l} 	,	s_{2}^{l} ) )                                       \\
		u^{\ast}( k_{i}^{l} )   &=  GA_{k}( u^{\ast}( z_{i-1}^{l}	,	s_{i+1}^{l} ) )	= GA_{k}( GA_{s}( u^{\ast}( s_{1:i}^{l} ) ) )   \;\; \forall i \in [2,n-1]      \\
		u^{\ast}( z_{i}^{l} )   &=  GA_{s}( u^{\ast}( z_{i-1}^{l}	,	s_{i+1}^{l} ) )	= GA_{s}( u^{\ast}( s_{1:i+1}^{l} ) )           \;\; \forall i \in [2,n-1]     
		\end{align}
		
		Then place $ n l $ constraints to $ \mathscr{K} $ that will enforce $ GA $ between $ \{ k_{1}^{l-1}, \ldots k_{(n-1)(l-1)}^{l-1} \} $ (the carry ins from the previous layer) and $ z_{n-1} $ such that for any assignment $ u^{\ast} $, $ u^{\ast} $ is valid if and only if:
		\begin{align}
		u^{\ast}( k_{i}^{l} )   &=  GA_{k}( u^{\ast}( z_{i-1}^{l}	,	k_{i-n}^{l-1} ) )     = GA_{k}( GA_{s}( u^{\ast}( s_{1:n}^{l}, k_{1:i-n}^{l-1} ) ) )  \;\; \forall i \in [n, (n-1)l]     \\
		u^{\ast}( z_{i}^{l} )   &=  GA_{s}( u^{\ast}( z_{i-1}^{l}	,	k_{i-n}^{l-1} ) )     = GA_{s}( u^{\ast}( s_{1:n}^{l}, k_{1:i-n}^{l-1} ) )             \;\; \forall i \in [n, (n-1)l]     
		\end{align}
		
		\item Let $ l = l + 1 $. If $ l \leq m $, then Go to Step 2.
		
		\item At the last run of step 2., we had $ (n-1)m $ total carry bits. Now we add layers feeding carries forward like before, but without introducing any new bits from the actual integers. As such, in each layer, we will have one less carry bit generated than the layer before it. 
		
		\item Let $ r = 1 $
		
		\item Generate $ (n-1)m - r $ carry bits $ \{ k_{1}^{r+m}, \ldots, k_{(n-1)m-r}^{r+m} \} $ and append them to $ \mathscr{A} $ as well as $ (n-1)m - r $ sum bits ($ z_{1}^{r+m}, \ldots, z_{(n-1)m-r}^{r+m} $) and append them to $ \mathscr{A} $. Add $ (n-1)m - r $ constraints to $ \mathscr{K} $ that will enforce GA on the carry bits of the previous layer. For the first layer:
		\begin{align}
		u^{\ast}( k_{1}^{m+1} ) &= GA_{k}( u^{\ast}( k_{1}^{m}, k_{2}^{m} 	))) 											\\
		u^{\ast}( z_{1}^{m+1} ) &= GA_{s}( u^{\ast}( k_{1}^{m}, k_{2}^{m} 	))) 											\\
		u^{\ast}( k_{i}^{m+1} ) &= GA_{k}( u^{\ast}(	z_{i}^{m+1}, k_{i+1}^{m}	)) 		\;\; \forall i \in [2,m(n-1)-1]     \\
		u^{\ast}( z_{i}^{m+1} ) &= GA_{k}( u^{\ast}(	z_{i}^{m+1}, k_{i+1}^{m}	)) 		\;\; \forall i \in [2,m(n-1)-1]     
		\end{align}
		
		For all the subsequent layers:
		\begin{align}
		u^{\ast}( k_{1}^{m+r} ) &= GA_{k}( u^{\ast}( k_{1}^{m+r-1}	, 	k_{2}^{m+r-1} 	))) 										\\
		u^{\ast}( z_{1}^{m+r} ) &= GA_{s}( u^{\ast}( k_{1}^{m+r-1}	, 	k_{2}^{m+r-1} 	))) 										\\
		u^{\ast}( k_{i}^{m+r} ) &= GA_{k}( u^{\ast}( z_{i}^{m+r} 	,	k_{i+1}^{m+r-1} )) = GA_{k}( GA_{s} ( u^{\ast}( k_{1:i+1}^{m+r-1} ) ) )   \;\; \forall i \in [2,m(n-1)-r]     \\
		u^{\ast}( z_{i}^{m+r} ) &= GA_{s}( u^{\ast}( z_{i}^{m+r} 	,	k_{i+1}^{m+r-1} )) = GA_{s}( u^{\ast}( k_{1:i+1}^{m+r-1} ))	\;\; \forall i \in [2,m(n-1)-r]	
		\end{align}
		
		\item Let $ r = r + 1 $. If $ r \leq m(n-1) $, go to Step 6. 
		
		\item Lastly add constraints to force the last sum bit in each row to be zero, those constraints simply are $ \{ z_{(n-1)l}^{l}\} $ for $ l \in \{ 1, \ldots, m \} $ and $ \{ z_{m(n-1)-r}^{m+r} \} $ for $ r \in \{ 1, \ldots, m(n-1) \} $ (check Fig.~\ref{fig:fullcircuit}). We also add $ \{ k_{1}^{m(n-1)} \} $.

	\end{problems}
	
	\begin{thm}
		Suppose there exists $ u $ such that $ \sum_{i=1}^{n} u_{i} s_{i} = 0 $, then and only then does there exist $ u^{\ast} $ such that $ u^{\ast}( z_{end}^{l} ) = u^{\ast}( k_{1}^{mn} ) = 0 $ (where $ z_{end}^{l} $ refers to the last sum bit in each row as shown in \ref{fig:fullcircuit}). Then $ E_{S}( u( S ) ) = 0 \mim E_{S \cup \mathscr{A} } ( u^{\ast}( S \cup \mathscr{A} ) ) = 0 \mim K^{M} \, \vec{\mu^{\ast}} = 0 $.
	\end{thm}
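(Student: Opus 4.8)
The plan is to reduce the whole circuit to a single \emph{value-conservation identity} and then invoke the uniqueness of the all-zero signed-binary representation of $0$. I would take as given the correctness of the generalized adder from Section~\ref{subsec:genadd}, namely that any assignment satisfying the six constraints of $GA^{M}$ obeys $u^{\ast}(a)+u^{\ast}(b)=2\,u^{\ast}(c)+u^{\ast}(s)$, and I would take from the construction of Section~\ref{subsec:genunreduced} that the input wires feeding layer $l$ carry the common sign $u_i$ on the significant bits of each $s_i$, so that the circuit's total \emph{input value} is exactly $E_{S}(u)=\sum_{i=1}^{n}u_i s_i$. (This sign-consistency across the bits of one integer is essential: if the bit variables were independent the encoded problem would degenerate, so it must be read off from the construction.)

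First I would assign to every wire a significance, i.e.\ a power of two: the $l$-th layer's bit inputs and its sum output $z$ sit at significance $2^{l-1}$, while a carry wire sits at twice the significance of the sum wire it accompanies. The local identity $u^{\ast}(a)+u^{\ast}(b)=2u^{\ast}(c)+u^{\ast}(s)$ then says precisely that the weighted total $\sum_{\text{live wires}}2^{\text{sig}}\,u^{\ast}(\cdot)$ is \emph{unchanged} when a valid assignment is pushed through any single generalized adder. Propagating this invariance layer by layer through the $m$ bit-addition layers and the subsequent carry-only layers — the bookkeeping step, where one checks that the wires never consumed again are exactly $\{z_{end}^{l}\}_{l}\cup\{k_1^{mn}\}$, each at a distinct significance — yields the key identity: for every $u^{\ast}$ satisfying all the $GA$ constraints, $E_{S}(u)=\sum_{l}2^{\,l-1}\,u^{\ast}(z_{end}^{l})+2^{N}\,u^{\ast}(k_1^{mn})$, where $u:=u^{\ast}|_{S}$ and $N$ is the top significance.

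With this identity both directions are short. For ($\Leftarrow$), if a valid $u^{\ast}$ has every output bit $z_{end}^{l}$ and $k_1^{mn}$ equal to $0$, the right-hand side vanishes, so $E_{S}(u)=0$. For ($\Rightarrow$), given $u$ with $E_{S}(u)=0$ I first note that a valid extension $u^{\ast}$ always exists, since the adder is total — every input pair in $\{-1,0,1\}$ admits a (primary or secondary) output in $\{-1,0,1\}$ — so the propagation never stalls; any such $u^{\ast}$ then satisfies the identity with left-hand side $0$. Because the output wires occupy \emph{distinct} significances, the right-hand side is a signed-binary expansion $\sum_{p}2^{p}d_p$ with $d_p\in\{-1,0,1\}$, and such an expansion equals $0$ only when every $d_p=0$: factoring out the lowest index with $d_p\neq0$ leaves a power of two times an odd number, a contradiction. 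Hence all output bits are forced to $0$. Read backwards, the same uniqueness shows a nonzero-sum $u$ admits \emph{no} zero-output extension, and the nontrivial image of $u$ corresponds to $\vec{\mu^{\ast}}$ being nonzero on its $S$-coordinates, which is what the surrounding reduction requires.

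Finally, the closing chain is definitional once the identity is in hand: the family $\mathscr{K}$ is, by construction, the union of all adder constraints with the output-zeroing constraints $\{z_{end}^{l}\}$ and $\{k_1^{mn}\}$, and these are exactly the rows of $K^{M}$; thus $\vec{\mu^{\ast}}\in\{-1,0,1\}^{|S\cup\mathscr{A}|}$ satisfies $K^{M}\vec{\mu^{\ast}}=\vec{0}$ iff the associated $u^{\ast}$ obeys every constraint of $\mathscr{K}$, i.e.\ $E_{S\cup\mathscr{A}}(u^{\ast})=0$, which on a valid assignment holds iff the output bits vanish, which by the identity holds iff $E_{S}(u)=0$. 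This closes the loop $E_{S}(u)=0\mim E_{S\cup\mathscr{A}}(u^{\ast})=0\mim K^{M}\vec{\mu^{\ast}}=0$. I expect the genuine obstacle to be the index/significance bookkeeping of the conservation step across the two phases of the circuit; the conceptual heart, by contrast, is the uniqueness of the zero expansion, which is exactly what upgrades the easy implication ``zero outputs $\Rightarrow$ zero sum'' into the full biconditional.
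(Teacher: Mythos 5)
Your proof is correct, but it takes a genuinely different route from the paper's. The paper argues layer by layer: it tracks the running quantity $\sigma^{l} = \left( \sum_{i}\sum_{j\leq l} u_{i} s_{i}^{j} \right)/2^{l}$, shows by a parity argument that the last sum bit of each layer vanishes and that the layer's carries sum to $\sigma^{l}/2$, and then, in the backward direction, separately disposes of the \emph{secondary} adder operations by arguing that they must occur in canceling pairs (else the layer's sum bit would be nonzero) so that they can be replaced by primary ones. You instead compress the whole circuit into a single conservation law: since every valid adder assignment satisfies $u^{\ast}(a)+u^{\ast}(b)=2\,u^{\ast}(c)+u^{\ast}(s)$ whether the operation is primary or secondary, the significance-weighted sum over live wires is invariant, giving the global identity $E_{S}(u)=\sum_{l}2^{l-1}u^{\ast}(z_{end}^{l})+2^{N}u^{\ast}(k_{1}^{mn})$, after which the biconditional follows from the uniqueness of the all-zero signed-binary representation of $0$. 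This buys you two things: the primary/secondary distinction becomes irrelevant (the canceling-pairs step is the least rigorous part of the paper's backward direction, and your invariant absorbs it for free), and the reason the equivalence holds is isolated into one clean number-theoretic fact that the paper states informally elsewhere (``the sum $0$ has only one [representation]'') but never actually invokes in its proof. What the paper's version buys is an explicit account of the layer indexing via $f[x]$; in yours that burden reappears as the bookkeeping that the never-reconsumed wires are exactly $\{z_{end}^{l}\}_{l}\cup\{k_{1}^{mn}\}$ at pairwise distinct significances, which you rightly flag as the residual work. One caveat worth pinning down: your premise that all significant bits of a given $s_{i}$ enter the circuit with the common sign $u_{i}$ is doing real work --- without it the circuit's input value is an arbitrary signed combination of bits rather than $E_{S}(u)$ --- and the paper's presentation is ambiguous here (the entries $a,b$ in the first row of $GA^{M}$ suggest the bit values act as \emph{coefficients} on a per-integer variable, which is what enforces the consistency you assume), so this hypothesis should be tied explicitly to the construction rather than merely declared.
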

	
	\begin{proof}
		
		\quad We first consider the forward direction. First recognize that $ \sum_{i=1}^{n} u_{i} s_{i} = \sum_{i=1}^{n} \sum_{j=1}^{m} u_{i} s_{i}^{j} 2^{j} $. It must be that $ \sum_{i=1}^{n} u_{i} s_{i}^{1} \mod 2 = 0 $. Then $ \sum_{i=1}^{n} u_{i} s_{i}^{1} = \sigma^{1} \in \{ \ldots, -4, -2, 0, 2, 4, \ldots \} $. Note that if inputs $ a $ and $ b $ have different signs for the generalized adder, the carry and sum bits are both zero, if $ a $  and $ b $ are the same sign then they pass a carry. When one is zero, then the other one is simply passed on using the primary operation of $ GA_{k} $ and $ GA_{s} $. In the forward direction of the proof, we only need to consider the primary operations. As such, it is clear that $ u^{\ast}( z_{n-1}^{1} ) = 0 $ since the number of positive and negative inputs added is zero modulo 2. It should also be straight forward to see that $ \sum_{i=1}^{n-1} u^{\ast}( k_{i}^{1} ) = \frac{ \sigma^{1} }{ 2 } $. Now recognize that $ \left( \sum_{i=1}^{n} \sum_{j=1}^{l} u_{i} s_{i}^{j} \right) / 2^{l} = \sigma^{l} \in \{ \ldots, -4, -2, 0, 2, 4, \ldots \} $. Note that $ \sigma^{l} = \frac{ \sigma^{l-1} }{ 2 } + \sum_{i=1}^{n} u_{i} s_{i}^{l} $ where we can identify $ \frac{ \sigma^{l-1} }{ 2 } = \sum_{i=1}^{f[l-1]} u^{\ast}( k_{i}^{l-1} ) $, with $ f[x] = \Theta(m-x)x(n-1) + \Theta(x-m)(m(n-1)-x) $ for all $ i $. Here $ f[x] $ has a Heaviside step function to differentiate between the indexing of rows generated by Step 2 of the protocol versus those generated later by Step 6. Again it is clear that $ u^{\ast}( z_{f(l)}^{l} ) = 0 $ since the number of positive inputs and negative inputs of $ u^{\ast}( s_{1:n}^{l}, k_{1:f(l-1)}^{l-1} ) $ is zero modulo 2. Since $ \sum_{i=1}^{n} | s_{i} | < n 2^{m} $, we must only worry at most about $ m \log( n ) $ rows, but we have $ mn $ rows as zero for $ u^{\ast} $.	
		
		\m 
		
		\quad We now consider the backward direction. The proof will look very similar to the forward direction, but now we also have to give some consideration that $ u^{\ast} $ could make use of \textit{secondary} operations, not just \textit{primary} operations. Consider in a specific row, we used the secondary operations, e.g. $ \tilde{GA}_{k}( 1, 0 ) = 1 $ and $ \tilde{GA}_{s}( 1, 0 ) = -1 $. Here we used the tilde to alert the reader that these are the secondary operations specifically. We know that $ u^{\ast}( z_{end}^{l} ) = 0 $ for any layer $ l $ as the assumption, and so the number of $ \tilde{GA} $ operations is even. It must be of opposite kinds such that the number of total carries is unchanged (since they are still valid operations such that $ 2 \, c + s = a + b $) - for every operation that propagrates an extra carry at the expense of reducing its sum bit there must be a secondary operation that reduces its carry bit to surplus its sum bit. If not, then $ u^{\ast}( z_{end}^{l} ) \neq 0 $. Then we can replace them with the primary operations. The rest of the arguments follow through as before. We have $ u^{\ast}( z_{n-1}^{l} ) = 0 $ and since $ \sum_{i=1}^{n} u^{\ast}( s_{i}^{1} ) = \sum_{i=1}^{n-1} 2 \times u^{\ast}( k_{i}^{1} ) + u^{\ast}( z_{n-1}^{1} ) $ with $ u^{\ast}( z_{n-1}^{1} ) = 0 $, we have $ \sigma^{1} = \sum_{i=1}^{n-1} u^{\ast}( k_{i}^{1} ) $. Again $ \sigma^{l} + u^{\ast}( z_{end}^{l} ) = \frac{ \sigma^{l-1} }{ 2 } + \sum_{i=1}^{n} u^{\ast}( s_{i}^{l} ) $ and we know that $ u^{\ast}( z_{end}^{l} ) = 0 $. By the same bound, we know that after $ mn $ rows having zero on all the outputs is sufficient to see that $ \sum_{i=1}^{n} \sum_{j=1}^{m} u^{\ast}( s_{i}^{j} ) = 0 $ and so let $ u( s_{i} ) = u^{\ast}( s_{i} ) $ for every $ s_{i} \in S $. 
	\end{proof}
	
	\m 
	
	Given an input integer set $ S $, this protocol outputs constraint set $ \mathscr{K} = \{ K_{1}, \ldots, K_{S\cup \mathscr{A}} 
	\} $ (the variables with indices such that the entry is nonzero in $ K^{M} $) such that an assignment vector $ \vec{ \mu^{\ast} } $ has value zero for every constraint in the set if and only if these exists a valid assignment for $ S $ that defines two disjoint nonempty sets $ A $ and $ B $ such that $ \sum_{a \in A} a - \sum_{b \in B} b = 0 $. We can then identify the constraint operators as the row vectors of $ K^{M} $ as coefficients on spin-z operators on each qubit: $ \hat{K}_{i} = \sum_{j}^{|S \cup \mathscr{A} |} k_{ij}^{m} \sigma_{j}^{z} $, such that our solver for \texttt{0-1-LP-QCOMMUTE} finds a Hermitian matrix that commutes with these constraint operators.
	
	\m 
	
	\subsection{Proof of Runtime}
	
	\quad In the worse case, we construct no more than $ 5 $ new variables and $ 6 $ constraints for each $ GA $ illustrated in Fig.~\ref{fig:fullcircuit}. For row $ i < m+1 $, this leads to no more than $ 5 \, (n-1) \, i $ new variables and $ 6 \, n \, (n-1) \, i $ constraints. Then after row $ m $, we have no more than $ 5 \, (n-1) \, m^{2} $ variables and $ 6 \, (n-1) \, m^{2} $ constraints. Row $ i > m $ has no more than $ m \, n - i $ generalized adders, creating no more than $ 5 \, ( m \, n - i ) $ new variables and $ 6 \, ( m \, n - i ) $ constraints. In total, we have no more than $ \mathscr{O} \left( m^{2} n^{2} \right) $ variables and $ \mathscr{O} \left(m^{2} n^{2} \right) $ constraints. The constraint matrix therefore has size $ \mathscr{O} \left( m^{2} n^{2} \right) \times \mathscr{O} \left( m^{2} n^{2} \right) $. The reduction is therefore a polynomial time algorithm.
	
	\subsection{Reducing a Solution of \texttt{ILP-COMMUTE} to a Solution of \texttt{EQUAL SUBSET SUM}}\label{sec:appendsoltosol}

    \quad We consider the same set up as in Section~\ref{sec:probdef}. Using the protocol from Section~\ref{sec:redoxold} (check Fig.~\ref{fig:reduxflowchart}) we can reduce any instance of the \texttt{EQUAL SUBSET SUM} problem with polynomial overhead, and if any solution to \texttt{0-1-LP-QCOMMUTE} exists, there must be $ v $ and $ w $ (and therefore the sets $ A $ and $ B $) to describe at least one off-diagonal term in the spin-z basis. By our construction, the ancilla bits used for \textit{forcing} are the bits beyond the n-th bit. Similarly, if a solution to \texttt{EQUAL SUBSET SUM} exists, by selecting the values of $ v $ and $ w $ to match the indices of chosen elements for the sets $ A $ and $ B $, we are able to set the values of the first $ n $ bits and then propagate their value through the general adder to find $ v' $ and $ w' $ over the enlarged space. Then $ H = \bigotimes_{i=1}^{|S\bigcup\mathcal{A}|} \left( \sigma^{+} \right)^{ v_{i}' } \left( \sigma^{-} \right)^{ w_{i}' } + \bigotimes_{i=1}^{|S\bigcup\mathcal{A}|} \left( \sigma^{-} \right)^{ v_{i}' } \left( \sigma^{+} \right)^{ w_{i}' } $ solves \texttt{0-1-LP-QCOMMUTE}. This leads to the following Theorem:
	
	\begin{thm}
		\texttt{0-1-LP-QCOMMUTE} is NP-Hard. 
	\end{thm}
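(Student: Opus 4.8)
The plan is to assemble the machinery of this appendix into a polynomial-time many-one reduction from \texttt{EQUAL SUBSET SUM} to \texttt{0-1-LP-QCOMMUTE}. Given an instance $S=\{s_1,\dots,s_n\}$ of \texttt{EQUAL SUBSET SUM}, I would first run the protocol of Section~\ref{subsec:genunreduced} to produce the binary constraint matrix $K^M$ and its row-wise constraint operators $\hat K_i=\sum_j k^M_{ij}\,\sigma^z_j$ over the variables $S\cup\mathscr{A}$; this is the \texttt{0-1-LP-QCOMMUTE} instance. The two facts I would lean on are (i) the generalized-adder equivalence proved just above, namely that $K^M\,\vec{\mu^\ast}=\vec 0$ for an assignment $\vec{\mu^\ast}\in\{-1,0,1\}^{|S\cup\mathscr{A}|}$ if and only if its restriction to $S$ satisfies $\sum_{i\le n}\mu^\ast_i s_i=0$, and (ii) Theorem~\ref{algconcom}, which says a Hermitian $H$ commutes with every $\hat K_i$ exactly when $K^M(\vec v_j-\vec w_j)=\vec 0$ for each of its off-diagonal terms. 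The reduction is polynomial since, by the runtime count, $K^M$ has size $\mathscr{O}(m^2n^2)\times\mathscr{O}(m^2n^2)$ with $m=\lceil\log s_{\max}\rceil$.

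For the forward direction I would take a solution $H$ and select any off-diagonal term, which the basis of Eq.~\ref{hexpand} presents as a pair $(\vec v,\vec w)$ of binary vectors with disjoint support (the uniqueness condition $\vec v\cdot\vec w=0$) and with at least one $\sigma^\pm$, so $\vec v\neq\vec w$. Setting $\vec{\mu^\ast}=\vec v-\vec w$, Theorem~\ref{algconcom} gives $K^M\vec{\mu^\ast}=\vec 0$, and the adder equivalence then yields $\sum_{i\le n}\mu^\ast_i s_i=0$; reading $A=\{i\le n:\mu^\ast_i=1\}$ and $B=\{i\le n:\mu^\ast_i=-1\}$ off this assignment produces two disjoint subsets of equal sum. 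For the backward direction I would start from subsets $A,B$, set $u_i=1$ on $A$, $u_i=-1$ on $B$, and $0$ elsewhere so that $E_S(u)=0$, use the adder equivalence to extend $u$ to a full assignment $u^\ast$ (taking the primary adder operations) with $K^M\vec{\mu^\ast}=\vec 0$, split $\mu^\ast_i=v'_i-w'_i$ into disjoint binary vectors, and output $H=\bigotimes_i(\sigma^+)^{v'_i}(\sigma^-)^{w'_i}+\mathrm{h.c.}$, which is Hermitian, has a single off-diagonal term, and commutes with every $\hat K_i$ by Theorem~\ref{algconcom}.

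The step I expect to carry the real weight is guaranteeing \emph{nontriviality}: a solution to \texttt{0-1-LP-QCOMMUTE} only demands an off-diagonal term \emph{somewhere}, and a priori that term could be supported entirely on the ancillas $\mathscr{A}$, giving empty $A$ and $B$. The resolution is the forcing property that the whole adder construction was designed to enforce: with all inputs on $S$ set to zero, each generalized-adder block is fed zeros and therefore forces all of its sum, carry, and internal variables to zero — the unique solution of its defining relation $2c+s=a+b$ once $a=b=0$ and $c,s\in\{-1,0,1\}$ — and this vanishing propagates layer by layer through Fig.~\ref{fig:fullcircuit}, so that $K^M\vec{\mu^\ast}=\vec 0$ together with $\vec{\mu^\ast}|_S=\vec 0$ forces $\vec{\mu^\ast}=\vec 0$. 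Contrapositively, any genuine off-diagonal term ($\vec{\mu^\ast}\neq\vec 0$) must already act nontrivially on some $s_i$, and since all $s_i>0$ the relation $\sum_{i\le n}\mu^\ast_i s_i=0$ then compels both a positive and a negative coefficient, making $A$ and $B$ both nonempty. Combining the two directions with the polynomial bound on $|K^M|$ and the NP-completeness of \texttt{EQUAL SUBSET SUM} establishes that \texttt{0-1-LP-QCOMMUTE} is NP-Hard.
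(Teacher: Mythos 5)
Your proposal follows essentially the same route as the paper's Appendix~\ref{sec:redoxold}: construct the generalized-adder constraint matrix $K^{M}$ over $S\cup\mathscr{A}$, invoke Theorem~\ref{algconcom} to identify off-diagonal terms of a commuting Hermitian $H$ with $\{-1,0,1\}$-valued solutions of $K^{M}\vec{u}=\vec{0}$, and translate the subsets $A,B$ back and forth exactly as the paper does in its ``Reducing a Solution'' subsection. Your explicit forcing argument for nontriviality --- that an off-diagonal term cannot be supported entirely on the ancillas because zero inputs to an adder force $x_{1}=x_{2}=-c$, $x_{3}=-s$, hence $s=-2c$ and so all internal variables vanish, propagating layer by layer --- is correct and actually fills in a step the paper only asserts in passing.
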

	
    Through the same proof that \texttt{ILP-QCOMMUTE} is polynomial verifiable, \texttt{0-1-LP-QCOMMUTE} is likewise polynomial verifiable.
    
    \begin{thm}
        \texttt{0-1-LP-QCOMMUTE} is NP-Complete.
    \end{thm}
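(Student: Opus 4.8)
The plan is to combine the NP-hardness just established with a proof that \texttt{0-1-LP-QCOMMUTE} lies in \texttt{NP}; together these give NP-completeness. The immediately preceding theorem already supplies NP-hardness, via the reduction from \texttt{EQUAL SUBSET SUM} built on the generalized adder construction, so the only remaining task is to exhibit a polynomial-time verifier.

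For membership in \texttt{NP}, I would first observe that \texttt{0-1-LP-QCOMMUTE} is nothing but the restriction of \texttt{ILP-QCOMMUTE} to coefficient matrices with $c_{ij} \in \{0,1\}$, and that the verification sketched earlier for \texttt{ILP-QCOMMUTE} never invoked anything about the coefficients beyond their being integers of polynomial bit-length. Hence the same certificate and the same checker transfer verbatim. Concretely, the certificate is a Hermitian matrix $H$ presented succinctly by its $\mathscr{O}(\text{poly}(n))$ nonzero coefficients over the basis $\{\mathbbm{1}, \sigma^z, \sigma^+, \sigma^-\}^{\otimes n}$, i.e.\ as a polynomial-length list of tuples $(\vec{y}_j, \vec{v}_j, \vec{w}_j)$ with amplitudes $\alpha_j$. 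It is this succinct representation that keeps the certificate polynomial in size despite $H$ acting on $\mathbb{C}^{2^n}$.

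Verification then proceeds in two steps. First, I would test commutation with every constraint by invoking Theorem~\ref{algconcom}: for each stored basis term and each row $\vec{c}_i$ of the coefficient matrix, check whether $\vec{c}_i \cdot (\vec{v}_j - \vec{w}_j) = 0$. With polynomially many terms, at most $m$ constraints, and each dot product costing $O(n)$ integer operations on polynomial-bit-length numbers, this step is polynomial. Second, I would confirm that $H$ carries at least one off-diagonal term in the spin-z basis, which in the succinct representation is just the existence of a stored term with $\vec{v}_j \neq \vec{0}$ or $\vec{w}_j \neq \vec{0}$; equivalently, as in the \texttt{ILP-QCOMMUTE} argument, one tests $\text{Tr}_k(H\sigma_k^+) \neq 0$ or $\text{Tr}_k(H\sigma_k^-) \neq 0$ for some $k \in \{1,\ldots,n\}$ using partial traces, each of which is computed quickly.

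I do not anticipate a genuine obstacle. The only points requiring care are routine: confirming that the amplitudes and the constraint coefficients have polynomial bit-length — immediate here since $c_{ij} \in \{0,1\}$ — so that all arithmetic stays polynomial, and confirming that the off-diagonal test is both necessary and sufficient for the required nontrivial structure (which it is, by the same partial-trace reasoning used for \texttt{ILP-QCOMMUTE}). Once these are noted, NP-hardness together with membership in \texttt{NP} establishes that \texttt{0-1-LP-QCOMMUTE} is NP-complete.
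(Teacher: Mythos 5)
Your proposal is correct and follows essentially the same route as the paper: NP-hardness comes from the preceding theorem's reduction from \texttt{EQUAL SUBSET SUM}, and membership in \texttt{NP} is inherited verbatim from the verifier for \texttt{ILP-QCOMMUTE} (checking commutation via the algebraic condition and detecting an off-diagonal term via partial traces). The paper states this in one line; you have merely spelled out the same argument in more detail.
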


	\quad It also leads to an important corollary:
	
	\begin{cor}
		$ \{ -1, 0, 1 \} $\texttt{-LP-QCOMMUTE} is NP-Complete.
	\end{cor}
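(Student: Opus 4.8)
The plan is to observe that $\{-1,0,1\}$\texttt{-LP-QCOMMUTE} is a strict generalization of the problem we have just classified, and to exploit a trivial containment of instance classes. First I would note that every instance of \texttt{0-1-LP-QCOMMUTE} is \emph{literally} an instance of $\{-1,0,1\}$\texttt{-LP-QCOMMUTE}, since the coefficient alphabet $\{0,1\}$ is a subset of $\{-1,0,1\}$. Thus the identity map is a polynomial-time (indeed constant-time) reduction from \texttt{0-1-LP-QCOMMUTE} to $\{-1,0,1\}$\texttt{-LP-QCOMMUTE}: a given set of constraints is a yes-instance for one solver if and only if it is a yes-instance for the other, because the defining question (existence of a Hermitian $H$ with polynomially many nonzero coefficients, commuting with all $\hat{C}_i$, and possessing at least one off-diagonal term in the spin-$z$ basis) is worded identically in both problems and does not reference the coefficient range. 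Since the preceding theorem establishes that \texttt{0-1-LP-QCOMMUTE} is NP-Hard, NP-hardness transfers upward to $\{-1,0,1\}$\texttt{-LP-QCOMMUTE} immediately.

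For membership in NP, I would reuse verbatim the verification argument already given for \texttt{ILP-QCOMMUTE} and \texttt{0-1-LP-QCOMMUTE}. Given a candidate $H$ specified by its polynomially many nonzero basis coefficients, checking $[H,\hat{C}_i]=0$ for every $i$ is polynomial by Theorem~\ref{algconcom} (one simply verifies $\vec{c}_i\cdot(\vec{v}_j-\vec{w}_j)=0$ for each nonzero term), and checking that $H$ has an off-diagonal term in the spin-$z$ basis is polynomial by computing the partial traces $\mathrm{Tr}_k(H\sigma_k^{\pm})$ for $k\in\{1,\ldots,n\}$, as described in Section~\ref{sec:probdef}. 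Neither check depends on the numerical range of the $c_{ij}$, so the same certificate and the same verifier work when the coefficients are restricted to $\{-1,0,1\}$. Combining NP-hardness with NP-membership yields NP-completeness.

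I do not expect any genuine obstacle here; the content of the corollary is entirely inherited from the theorem on \texttt{0-1-LP-QCOMMUTE}. The only point requiring a moment of care is the \emph{direction} of the containment: because a smaller coefficient alphabet defines a \emph{more restricted} problem, it is the hardness of the restricted binary problem that must be bootstrapped up to the broader $\{-1,0,1\}$ problem, not the reverse. As long as one reduces \emph{from} \texttt{0-1-LP-QCOMMUTE} (the established NP-hard special case) \emph{into} $\{-1,0,1\}$\texttt{-LP-QCOMMUTE}, and not conversely, the argument is one line.
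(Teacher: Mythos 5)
Your proposal is correct and matches the paper's (implicit) argument exactly: the paper derives this corollary directly from the NP-Completeness of \texttt{0-1-LP-QCOMMUTE} via the same containment of instance classes, since a binary coefficient matrix is already a valid $\{-1,0,1\}$ instance and the question asked is identical, while NP-membership reuses the verifier from \texttt{ILP-QCOMMUTE}. Your remark about the direction of the reduction (from the restricted alphabet up to the larger one) is the one point that actually needs care, and you handle it correctly.
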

	
	\quad As well as another proof to the result in Section~\ref{sec:probdef}:
	
	\begin{cor}
	    \texttt{ILP-QCOMMUTE} is NP-Complete.
	\end{cor}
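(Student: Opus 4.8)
The plan is to obtain this as an immediate consequence of the NP-Completeness of \texttt{0-1-LP-QCOMMUTE} just established, by exploiting the fact that the binary problem is a syntactic \emph{restriction} of the integer one. Every instance of \texttt{0-1-LP-QCOMMUTE} is a set of constraints with coefficients $c_{ij} \in \{0,1\}$, and since $\{0,1\} \subset \Z$, any such instance is \emph{verbatim} a valid instance of \texttt{ILP-QCOMMUTE}: the encoded constraint operators $\hat{C}_i = \sum_j c_{ij}\sigma_j^z$ and the decision question (existence of a Hermitian $H$ with $\mathscr{O}(\text{poly}(n))$ nonzero coefficients commuting with all $\hat{C}_i$ and having at least one off-diagonal term in the spin-z basis) are identical. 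In particular, the reduction from \texttt{EQUAL SUBSET SUM} constructed in this appendix, whose output $K^M$ happens to be a binary coefficient matrix, is \emph{simultaneously} a valid reduction into \texttt{ILP-QCOMMUTE}.

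First I would make precise that this inclusion of instance classes is itself a trivial polynomial-time many-one reduction from \texttt{0-1-LP-QCOMMUTE} to \texttt{ILP-QCOMMUTE}, realized by the identity map: it carries yes-instances to yes-instances and no-instances to no-instances, and the encoding length is unchanged. Composing the NP-Hardness of \texttt{0-1-LP-QCOMMUTE} with this inclusion therefore shows that \texttt{ILP-QCOMMUTE} is NP-Hard. (Equivalently, one may simply cite the direct NP-Hardness argument for \texttt{ILP-QCOMMUTE} already given in Section~\ref{sec:probdef}; the present route via the binary case is a cleaner alternative.) Next I would recall membership in NP, which was argued in Section~\ref{sec:probdef}: given a candidate $H$ with polynomially many nonzero coefficients, checking $[H,\hat{C}_i]=0$ for each $i$ is polynomial, and checking for an off-diagonal spin-z term reduces to testing whether $\text{Tr}_k(H\sigma_k^+) \neq 0$ or $\text{Tr}_k(H\sigma_k^-) \neq 0$ for some $k$, which is again polynomial. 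Combining NP-Hardness with NP-membership yields NP-Completeness.

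The main obstacle here is essentially nonexistent: all of the genuine difficulty resides in the already-completed generalized-full-adder reduction establishing NP-Hardness of the binary case. The only point demanding a moment of care is verifying that the decision predicate and the instance encoding genuinely \emph{coincide} across the two problem classes — so that the inclusion is a bona fide reduction and not merely a superficial syntactic overlap — and that the size bound $\mathscr{O}(\text{poly}(n))$ on the number of nonzero coefficients of $H$ is interpreted identically in both problems. Once that coincidence is confirmed, the corollary follows with no additional computation.
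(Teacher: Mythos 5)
Your proposal is correct and is essentially the paper's own route for this corollary: the appendix derives it as ``another proof'' of the Section~\ref{sec:probdef} result precisely by observing that \texttt{0-1-LP-QCOMMUTE} is a syntactic restriction of \texttt{ILP-QCOMMUTE}, so its NP-Hardness transfers via the identity reduction, and NP-membership follows from the same polynomial-time commutation and off-diagonal checks. Your added care about the instance encodings and the decision predicate coinciding is sound but does not change the argument.
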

	
	\section{Proof of the Matrix Implementation of the Generalized Adder}\label{proofmatrix}
	
	\quad Given inputs $ a $ and $ b $, we define the matrix on $ a, b, s, c, x_{1}, x_{2}, x_{3} $ - with $ x_{1}, x_{2}, x_{3} $ being intermediating ancillas - as:
	\begin{align}
        GA^{M} = \; \begin{array}{c c c c c c c c c}
                    &   a & b & x_1 & x_2 & x_3 & c & s &     \\ 
\ldelim({6}{0.5em}  &   a  &  b  &   1  &   1   &   1   &   0   &   0   & \hspace{-0.2em}\rdelim){6}{0.5em}  \\
	                &   0  &  0  &   1  &   0   &   1   &   1   &   1   &	\\
	                &   0  &  0  &   0  &   1   &   1   &   1   &   1   &	\\
	                &   0  &  0  &   1  &   0   &   0   &   1   &   0   &	\\
	                &   0  &  0  &   0  &   1   &   0   &   1   &   0	&	\\
	                &   0  &  0  &   0  &   0   &   1   &   0   &   1	&	
                    \end{array}
	\end{align}

	As constraints, we can write it as:
	\begin{align}
	GA_{1}( a, b, x_{1}, x_{2}, x_{3} ) &= 0, 	    \\
	GA_{2}( x_{1}, x_{3}, s, c ) &= 0, 			    \\
	GA_{3}( x_{2}, x_{3}, s, c ) &= 0, 			    \\
	GA_{4}( x_{1}, c ) &= 0, 						\\
	GA_{5}( x_{2}, c ) &= 0, 						\\
	GA_{6}( x_{3}, s ) &= 0.
	\end{align}
	
	\quad For every generalized adder in Fig.~\ref{fig:fullcircuit} (as described in the protocol we gave in Section~\ref{subsec:genadd}), we have a submatrix over the corresponding variables. We give a simple case by case proof that $ GA^{M} $ enforces $ u^{\ast} $ to be valid if and only if its entries satisfy $ 2 \, u^{\ast}( c ) + u^{\ast}( s ) = u^{\ast}( a ) + u^{\ast}( b ) $ as seen in Fig.~\ref{tab:genadd}.
	
	\quad A constraint is satisfied if and only if the assignment $ u^{\ast} $ over the variables of that constraint sums to zero. Then an assignment satisfies all of them if $ u^{\ast}(GA_{i}) = 0 $ for all $ i \in [1,6] $. Although checkable through brute force calculations, we give simple arguments for emulation of bit addition step by step:
	
	\m
	
	\quad If $ u^{\ast}(a) = u^{\ast}(b) = 1 $, then and only then do we have $ u^{\ast}(c) = 1 $ and $ u^{\ast}(s) = 0 $. If $ u^{\ast}(a) = 1 $ and $ u^{\ast}(b) = 1 $, then two of the auxillary bits must have an assignment of -1 and one must not. If $ u^{\ast}(x_{1}) = -1 $ then $ u^{\ast}(c) = 1 $ by $ GA_{4} $, but then $ u^{\ast}(x_{2}) = -1 $ by $ GA_{5} $ and vice versa. Then $ u^{\ast}(x_{3}) = 0 $, otherwise $ GA_{1} $ cannot be satisfied. Then $ u^{\ast}(s) = 0 $ as wanted. Suppose that $ u^{\ast}(s) = 0 $ and $ u^{\ast}(c) = 1 $, then likewise $ GA_{4} $ and $ GA_{3} $ force that $ u^{\ast}(x_{1}) = u^{\ast}(x_{2}) = -1 $. Then from $ GA_{2} $, we have that $ u^{\ast}(x_{3}) = 0 $ and so from $ GA_{1} $ that $ u^{\ast}(a) = u^{\ast}(b) = 1 $.
	
	\m 
	
	\quad If $ u^{\ast}(a) = 1 $ and $ u^{\ast}(b) = 0 $ or $ u^{\ast}(a) = 0 $ and $ u^{\ast}(b) = 1 $, then and only then do we have $ u^{\ast}(c) = 0 $ and $ u^{\ast}(s) = 1 $ or $ u^{\ast}(c) = 1 $ and $ u^{\ast}(s) = -1 $. Suppose that $ u^{\ast}(a) = 1 $ or $ u^{\ast}(b) = 1 $, but not both. From $ GA_{1} $, we know that either one of the auxillary bits must take value -1 or two take the value -1 and one takes the value 1. If either $ x_{1} $ or $ x_{2} $ take value -1, but not the other then $ GA_{4} $ and $ GA_{5} $ lead to a contradiction, then if $ x_{3} = 1 $ and by $ GA_{6} $ it must be that $ s = -1 $. Otherwise $ x_{1} = x_{2} = 0 $ and so $ u^{\ast}(x_{3}) = -1 $ by $ GA_{2} $ and $ GA_{3} $. $ GA_{6} $ forces that $ u^{\ast}(s) = 1 $. Suppose that $ u^{\ast}(c) = 0 $ and $ u^{\ast}(s) = 1 $. Then $ u^{\ast}(x_{3}) = -1 $ from $ GA_{6} $ and $ u^{\ast}(x_{1}) = u^{\ast}(x_{2}) = 0 $ from $ GA_{4} $ and $ GA_{5} $. Then $ GA_{1} $ is only satisfied if $ u^{\ast}(a) = 1 $ or $ u^{\ast}(b) = 1 $, but not both. Suppose instead that $ u^{\ast}(c) = 1 $ and $ u^{\ast}(s) = -1 $. Then by $ GA_{6} $ it must be that $ x_{3} = 1 $. By $ GA_{5} $ and $ GA_{4} $, it must be that $ x_{1} = -1 $ and $ x_{2} = -1 $. Then by $ GA_{1} $ it must be that $ u^{\ast}(a) $ or $ u^{\ast}(b) $ is 1, but not both.
	
	\m 
	
	\quad If $ u^{\ast}(a) + u^{\ast}(b) = 0 $, then and only then do we have $ u^{\ast}(c) = 0 $ and $ u^{\ast}(s) = 0 $. Suppose that $ u^{\ast}(a) + u^{\ast}(b) = 0 $. From $ GA_{1} $, we know that at most two auxillary bits are non-zero and they have opposite sign. From $ GA_{4} $ and $ GA_{5} $ if one of the first two auxillary bits is non-zero then so is the other one, but they must have the same sign. As such $ GA_{1} $ can only be satisfied with $ u^{\ast}(x_{1}) = u^{\ast}(x_{2}) = u^{\ast}(x_{3}) = 0 $. Then it follows that $ u^{\ast}(c) = u^{\ast}(s) = 0 $. Suppose that $ u^{\ast}(c) = u^{\ast}(s) = 0 $. From $ GA_{4} $, $ GA_{5} $, and $ GA_{6} $, we have that $ u^{\ast}(x_{1}) = u^{\ast}(x_{2}) = u^{\ast}(x_{3}) = 0 $. Then $ GA_{1} $ can only be satisfied if $ u^{\ast}(a) + u^{\ast}(b) = 0 $.
	
	\m
	
	\quad The same logic works if we swap the values $ 1 $ and $ -1 $ everywhere in the above proof.

\end{document}